\newtheorem{rem}{Remark}
\begin{document}
%
\title{ Excitation Operator based Fault Separation Applied to a Quadrotor UAV}

%

\author{Sicheng Zhou,
	    Meng Wang,
	    Jindou Jia,
	    Kexin Guo,
	    Xiang Yu,
	    Youmin Zhang,~\IEEEmembership{Fellow,~IEEE}
	   and  Lei Guo*,~\IEEEmembership{Fellow,~IEEE}
\thanks{This work was partially supported by the National Key Research and Development Program of China (2020YFA0711200), National Natural science Foundation of China (Grant No. 61903019, 61973012, 62003018 and 62173024), Key Research and Development Program of Zhejiang(2021C03158),  Zhejiang Provincial Natural Science Foundation (Grant No.LO20F030006 and LD21F030001).}
\thanks{S. Zhou, M. Wang, and J. Jia are with the School of Automation Science and Electrical Engineering, Beihang University, 100191, Beijing, China. Email: \{zb1903003, mwbuaa001, jdjia\}@buaa.edu.cn.}
\thanks{K. Guo is with the School of Aeronautic Science and Engineering, Beihang University, 100191, Beijing, China. Email: kxguo@buaa.edu.cn.}
\thanks{X. Yu and L. Guo are with the School of Automation Science and Electrical Engineering, Beihang University, 100191, Beijing, China, and Hangzhou Innovation Institute, Beihang University, 310051, Hangzhou, China. Email: \{xiangyu\_buaa, lguo\}@buaa.edu.cn.}
\thanks{Y. M. Zhang is with the Department of Mechanical, Industrial and Aerospace Engineering, Concordia University, Montreal, QC H3G 1M8, Canada. Email: ymzhang@encs.concordia.ca.}
\thanks{* Corresponding author.}}

%
%

\markboth{Journal of \LaTeX\ Class Files,~Vol.~, No.~, Feb~2023}%
{Shell \MakeLowercase{\textit{et al.}}: Fault Separation for a Quadrotor UAV based on an Excitation Operator}
%



\maketitle

\begin{abstract}
This paper presents an excitation operator based fault separation architecture for a quadrotor unmanned aerial vehicle subject to actuator faults, actuator aging, and load uncertainty. The actuator fault dynamics is deeply excavated, containing the deep coupling information among the actuator faults, the system states, and control inputs. By explicitly considering the physical constraints and tracking performance, an excitation operator and corresponding integrated state observer are designed to estimate separately actuator fault and load uncertainty. Moreover, a fault separation maneuver and a safety controller are proposed to ensure the tracking performance when the excitation operator is injected. Both comparative simulation and flight experiments have demonstrated the effectiveness of the proposed fault separation scheme while maintaining high levels of tracking performance.
\end{abstract}

\begin{IEEEkeywords}
 Fault separation, excitation operator, safety control, quadrotor UAV.
\end{IEEEkeywords}

%
\IEEEpeerreviewmaketitle

\section{Introduction}
%
%
%
%
\IEEEPARstart{E}{nsuring} system safety during flight is crucial to the development of quadrotor unmanned aerial vehicle (UAV). Fault detection and diagnosis (FDD) plays an essential role in ensuring the safety of quadrotor UAV \cite{yu2015survey,9385897}. However, external disturbances, system uncertainties, and input delays, along with compensatory control maneuver, usually affect the reliability and accuracy of fault diagnosis \cite{guo2014anti,guo2005disturbance}. It is difficult to guarantee the separability of faults due to the coupling relationship of such external disturbances, system uncertainties, and measurement noises.

Tremendous efforts have been devoted to addressing the problem of FDD \cite{9279322,9124667}.In \cite{jia2022novel}, an optimization-based framework for the cooperative design of active FDD and FTC is developed to increase the accuracy of fault diagnosis and maintain tracking performance. In \cite{10113778}, a new health indicator construction method based on a quantitative estimation neural network is proposed and can effectively reduce the impact of variable speeds. An adaptive Kalman filter is proposed for actuator fault diagnosis in the stochastic framework \cite{ZHANG2018333}. In \cite{9992045}, by explicitly considering the real physical constraints, a sliding mode observer (SMO) is implemented on the autopilot of a quadrotor UAV. An adaptive activation transfer learning approach is proposed for industrial scenarios which can achieve fault diagnosis under the uncertainties and measurement noises \cite{10056250}.

However, the above methods all treat the faults as lumped terms for composite estimation, which significantly reduces the accuracy of FDD under disturbances or uncertainties. In \cite{9743552}, a finite-time sliding-mode observer is designed to detect, isolate, and identify actuator faults under the presence of external disturbances. In \cite{cao2017anti}, the authors propose a disturbance observer (DO) and a fault diagnosis observer to estimate the unknown modeled disturbance and control fault, respectively. In a similar vein, An adaptive augmented state/FDD observer is presented to estimate the system state, sensor and actuator faults simultaneously \cite{9200647}. In \cite{8301571}, a fault-tolerant estimation approach is developed by combining sensor FDD results and air data reconstruction. Unlike the previous works, this approach does not explicitly consider external disturbances, but rather focuses on the accurate estimation of sensor faults and air data in real-time.  

However, there still exist several technical challenges to improve the reliability and accuracy of the FDD in the quadrotor safety control system.

\begin{enumerate}

\item Most of the traditional passive fault diagnosis approaches determine the health status of the quadrotor UAV system by monitoring system states. It is difficult to detect system degradation and minor faults. Meanwhile, the false alarm rate and missed detection rate of the FDD module increase. As a result, the separation estimation of the coupled faults and the uncertainties is of paramount importance to improve the robustness and accuracy of FDD.




\item The active fault diagnosis method requires redundant control channels to inject auxiliary input signals. It can be directly applied to the control of over-actuated systems. Nevertheless, it is difficult to apply the active method to the under-actuated quadrotor UAV system. Besides, auxiliary input signal may cause a secondary damage of the quadrotor UAV system.
\end{enumerate}

Motivated by addressing the preceding issues, this paper focuses on handling the actuator faults, aging, and load uncertainty in a separation manner. Similar to the active FDD, an excitation operator, including two kinds of auxiliary input signals, is designed by taking the physical constraints and tracking performance into consideration. A fault separation maneuver of quadrotor UAV is designed to guarantee the under-actuated system safety when the auxiliary input signals are injected. An integrated state observer can achieve the separate estimation of actuator faults, aging, and uncertainties according to the characteristics of auxiliary input signals. The technical contributions are summarized as follows:

\begin{enumerate}
\item	The relationships among the actuator fault dynamics, control input, system state, and load uncertainty are revealed in this study. In comparison of the existing work \cite{9684676}, the dynamic of actuator fault has been fully used, reducing the conservativeness.

\item	 An excitation operator is designed to decrease the impact of actuator faults and aging on the system. Meanwhile, the physical constraints and tracking performance of the system can be guaranteed. When comparing to the passive FDD methods \cite{freddi2012diagnostic, cao2017anti, cen2014robust}, the auxiliary input signals can assist the separation among actuator fault, aging, and load uncertainty.

\item In comparison of the active FDD algorithm \cite{jia2022novel}, the condition of redundant channels is no longer required because of the proposed fault separation maneuver. By sacrificing the controllability of yaw channel, a redundant control channel can be created to inject auxiliary input signals. Meanwhile, the secondary damage caused by excitation operator can be avoided.
\end{enumerate}

The remainder of this paper includes four sections. In Section II, the mathematical model and actuator fault deep-coupling model of a quadrotor UAV are proposed. The fault separation scheme is presented in Section III, including the auxiliary input signals design,  integrated FDD module, and the corresponding safety control scheme. Section IV uses numerical simulations and flight tests to illustrate the effectiveness of the proposed scheme in this paper. Finally, this paper is concluded in Section V.

\section{Mathematical Model of Quadrotor UAV}
In order to validate the proposed method, the mathematical quadrotor model and the actuator fault deep-coupling model are established firstly. As shown in Fig. \ref{fig_1}, two coordinate reference frames are usually involved. The Body Frame (BF) $\mathcal{F}\left({O}_{B}{x}_{B}{y}_{B}{z}_{B}\right)$ is the body-fixed frame at the center of gravity (CoG) of the quadrotor UAV. The Inertial Frame (IF) $\mathcal{F}\left({O}_{E}{x}_{E}{y}_{E}{z}_{E}\right)$ is an Earth-fixed inertial frame at a defined location ${O}_{E}$. 


\begin{figure}[!t]
 \centering
 \includegraphics[width=2.5in]{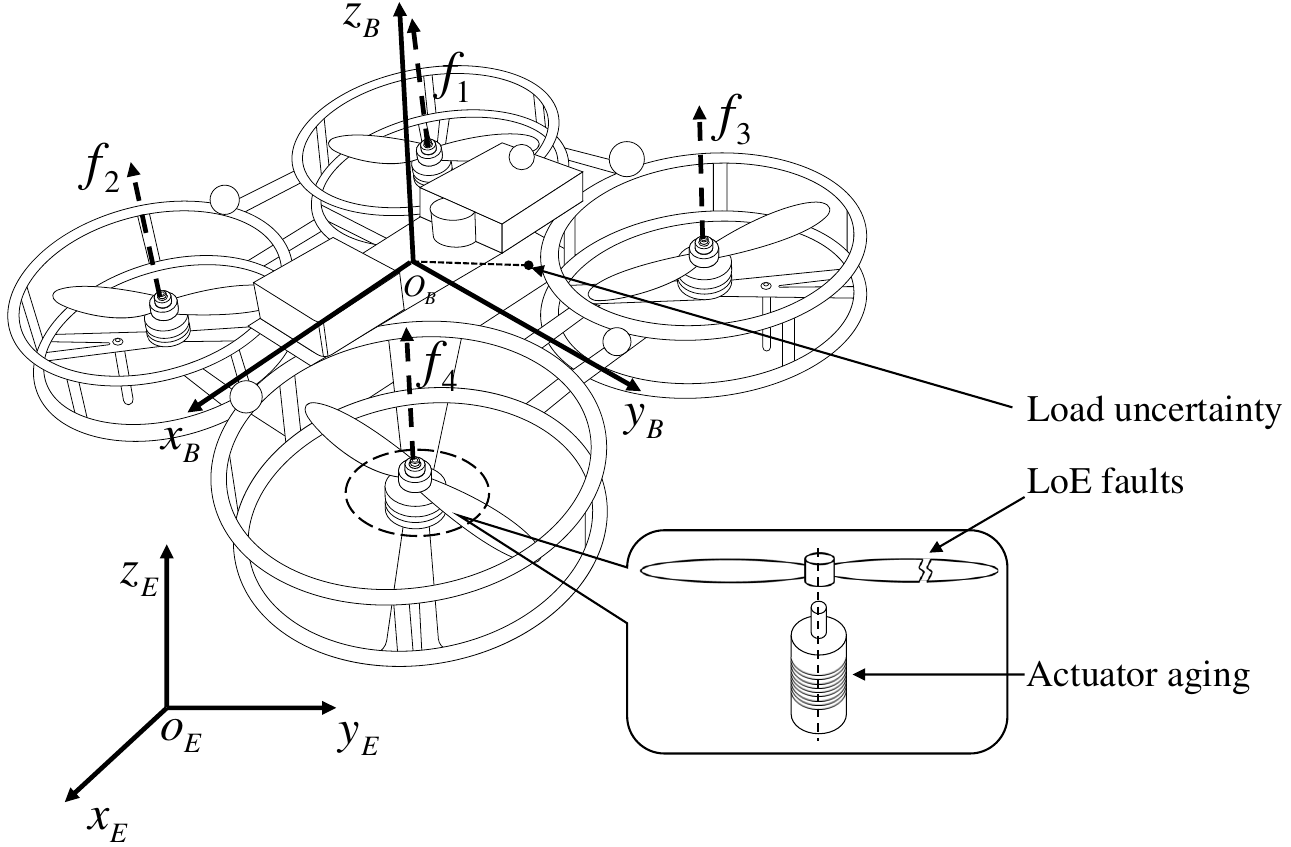}
 \caption{The structure of a quadrotor UAV with the body frame and inertial frame.}
 \label{fig_1}
\end{figure}

The coordinate transform matrix from the BF to the IF can be expressed as:
\begin{equation}
\label{eq_1}
{{R}_{EB}}=\left[ \begin{matrix}
{{{C}_{\psi }}}{{C}_{\theta }} & -{{S}_{\psi }}{{C}_{\phi }}+{{{C}_{\psi }}}{{S}_{\theta }}{{S}_{\phi }} & {{S}_{\psi }}{{S}_{\phi }}+{{{C}_{\psi }}}{{S}_{\theta }}{{C}_{\phi }}  \\
{{S}_{\psi }}{{C}_{\theta }} & {{{C}_{\psi }}}{{C}_{\phi }}+{{S}_{\psi }}{{S}_{\theta }}{{S}_{\phi }} & -{{{C}_{\psi }}}{{S}_{\phi }}+{{S}_{\psi }}{{S}_{\theta }}{{C}_{\phi }}  \\
-{{S}_{\theta }} & {{C}_{\theta }}{{S}_{\phi }} & {{C}_{\theta }}{{C}_{\phi }}  \\
\end{matrix} \right],
\end{equation}
where ${{\Omega }_{b}}={{[\begin{matrix}
		\phi  & \theta  & \psi   \\
		\end{matrix}]}^{T}}$ represents body-axis pitch, roll, and yaw angle. {${{S}_{*}}$ and ${{C}_{*}}$ denote $\sin \left( *  \right)$ and $\cos \left( *  \right)$}, respectively.

\subsection{Nonlinear Quadrotor UAV Model}
The quadrotor UAV is regarded as a rigid body with constant mass $m$ and constant moment of inertia $J$. The nonlinear dynamic equations of a quadrotor UAV can be modeled as:
\begin{equation}
\label{eq_2}
\left\{ \begin{array}{*{35}{l}}
{{{\dot{p}}}_{E}}={{v}_{E}},  \\
{{{\dot{v}}}_{E}}=\frac{1}{m}{{R}_{EB}}\left( {{\Omega }_{b}} \right){{F}_{m}}+G,  \\
{{{\dot{\Omega }}}_{b}}={{R}_{{0}}}\left( {{\Omega }_{b}} \right)\omega,   \\
\dot{\omega }={\tilde{D}}\left( {{\Omega }_{b}},\omega  \right)+{{D}_{M}}+\Theta \left( {{J}^{-1}_{x}},{{J}^{-1}_{y}},{{J}^{-1}_{z}} \right){{\xi }_{\omega }},  \\
\end{array} \right.
\end{equation}
where $G={{\left[ \begin{matrix}
		0 & 0 & -g  \\
		\end{matrix} \right]}^{T}}$ with $g$ is the gravity constant,
\begin{equation}
\label{eq_3}
{{R}_{{0}}}\left( {{\Omega }_{b}} \right)=\left[ \begin{matrix}
1 & {{S}_{\theta }}C_{\theta }^{-1}{{S}_{\phi }} & {{S}_{\theta }}C_{\theta }^{-1}{{C}_{\phi }}  \\
0 & {{C}_{\phi }} & -{{S}_{\phi }}  \\
0 & C_{\theta }^{-1}{{S}_{\phi }} & C_{\theta }^{-1}{{C}_{\phi }}  \\
\end{matrix} \right],
\end{equation}

\begin{equation}
\label{eq_4}
{\tilde{D}}\left( {{\Omega }_{b}},\omega  \right)={{\left[ \begin{matrix}
		\frac{{{J}_{y}}-{{J}_{z}}}{{{J}_{x}}}qr & \frac{{{J}_{z}}-{{J}_{x}}}{{{J}_{y}}}pr & \frac{{{J}_{x}}-{{J}_{y}}}{{{J}_{z}}}pq  \\
		\end{matrix} \right]}^{T}}
\end{equation}
are the state vector of the rotation, and
\begin{equation}
\label{eq_5}
{{D}_{M}}={{\left[ \begin{matrix}
		J_{x}^{-1}{{M}_{x}} & J_{y}^{-1}{{M}_{y}} & J_{z}^{-1}  \\
		\end{matrix}{{M}_{z}} \right]}^{T}}
\end{equation}
are the input matrix of the rotation. $\Theta \left(* \right)$ represents the diagonal matrix of vector $*$. The parameters are listed in Table \ref{table_1}.

System uncertainty is a primary factor affecting the FDD accuracy. The load uncertainty is considered as the major root cause of system uncertainty in this paper. Different from the existing work \cite{9684676} where the load uncertainty is treated a derivative-bounded variable, the relationship between load uncertainty and attitude of quadrotor UAV can be fully used.
\begin{equation}
	\label{eq_5_1}
{{\xi }_{\omega }}={{l}_{m}}\times mR_{EB}G=B_{\xi}^{*}{l}_m ,
\end{equation}
where ${l}_{m}$ represents the CoG offset vector.
\definecolor{mygray}{gray}{.9}
\begin{table}[!t]
	\renewcommand{\arraystretch}{2}
	\caption{Parameters of Quadrotor UAV Dynamics}
	\label{table_1}
	\centering
	\begin{tabular}{ll}
		\toprule
		Parameter & Meaning  \\
		\midrule
		$ {{p}_{E}}=[x \quad y \quad z]^T $ & The position of the quadrotor UAV in the IF\\
		\rowcolor{mygray}$ {{v}_{E}}=[u \quad v \quad w]^T $ & The linear velocity of the quadrotor UAV in the IF\\
		$ {\omega}=[p \quad q \quad r]^T $ & The angular rates of the quadrotor UAV in the BF\\
		\rowcolor{mygray}$ {{\xi }_{\omega}} $ & Unknown disturbance caused by load uncertainty  \\
		$ [{J}_{x} \quad {J}_{y} \quad {J}_{z}]^T $ & The moment of inertia in BF\\
		\rowcolor{mygray}$ [{M}_{x} \quad {M}_{y} \quad {M}_{z}]^T $ & The rolling torque, the pitching torque and the \\
		\rowcolor{mygray} & yawing torque\\
		$ {{F}_{m}} $ & The total thrust magnitude along $ z_{B} $\\
		\bottomrule
	\end{tabular} 
\end{table} 

\subsection{Actuator Model}
The body of a quadrotor UAV is driven by four brushless direct current motors (BLDCMs). Therefore, Given the physical configuration of the quadrotor UAV, the control inputs $u=\left[ \begin{matrix}
	{{F}_{m}} \quad {{M}_{x}} \quad {{M}_{y}} \quad {{M}_{z}}
\end{matrix} \right]^{T}$ can be simplified as follows
\begin{equation}
\label{eq_6}
\left[ \begin{matrix}
{{F}_{m}}  \\
{{M}_{x}}  \\
{{M}_{y}}  \\
{{M}_{z}}  \\
\end{matrix} \right]={{R}_{u}}{{f}_{s}}=\left[ \begin{matrix}
1 & 1 & 1 & 1  \\
-\frac{{{d}_{\phi }}}{2} & -\frac{{{d}_{\phi }}}{2} & \frac{{{d}_{\phi }}}{2} & \frac{{{d}_{\phi }}}{2}  \\
\frac{{{d}_{\theta }}}{2} & -\frac{{{d}_{\theta }}}{2} & \frac{{{d}_{\theta }}}{2} & -\frac{{{d}_{\theta }}}{2}  \\
{{c}_{\tau f}} & -{{c}_{\tau f}} & -{{c}_{\tau f}} & {{c}_{\tau f}}  \\
\end{matrix} \right]\left[ \begin{matrix}
{{f}_{1}}  \\
{{f}_{2}}  \\
{{f}_{3}}  \\
{{f}_{4}}  \\
\end{matrix} \right],
\end{equation}
where $f={{\left[ \begin{matrix}
			{{f}_{1}} & {{f}_{2}} & {{f}_{3}} & {{f}_{4}}  \\
		\end{matrix} \right]}^{T}}$ represents the forces generated by each rotor. The transformation matrix $R_u$ is used to convert $f$ into the control input vector $u$, which is used to control the quadrotor UAV. The physical constraints of each actuator are given by $0\le f_i\le {{f}{\max }}\left( i=1,\ldots ,4 \right)$, where ${f}{\max }$ is the maximum force generated by each rotor.

The actuator faults, which are modeled as loss of effectiveness (LoE) \cite{5966371}, can be represented as
\begin{equation}
	\label{eq_6_F1}
	f_d=\Gamma f,
\end{equation}
where $\Gamma$ is determined by a set of parameters $\Theta \left( {{\lambda }_{1}},{{\lambda }_{2}},{{\lambda }_{3}},{{\lambda }_{4}} \right)$, with $0 \le {{\lambda }_{i}} \le 1 \left( i=1,\ldots ,4 \right)$. The binary variable ${{\lambda }_{i}}$ represents the degree of fault in the $i\text{th}$ rotor. The condition ${{\lambda }_{i}}=0$ implies that the $i\text{th}$ rotor is fully damaged, while ${{\lambda }_{i}}=1$ denotes a healthy rotor. By incorporating this mapping matrix into the control system, the authors are able to effectively account for the impact of actuator degradation on the quadrotor UAV's performance

To simplify the analysis and control design, the model of BLDCM is approximated as a second-order system of the form ${{f}_{d}}={{\left( \frac{1}{Ts+1} \right)}^{2}}f$, where $f$ and $f_d$ are the input and output forces of the motor, respectively. This approximation allows the motor dynamics to be effectively represented by  the time constant $T$.

\subsection{Problem Statement}
The degradation of control inputs in a quadrotor UAV can be caused by two important factors: LoE faults and coil aging faults. To account for these factors, the actuator fault model can be summarized as follows
\begin{equation}
\label{eq_7_1}
f_{i}^{*}={{\lambda }_{i}}{{\left( \frac{1}{\left( {{T}_{a}}+{{T}_{c}} \right)s+1} \right)}^{2}}f_{i}^{c},
\end{equation}
where ${{f}_{i}}^{c}$ denotes the control command, and $f_{i}^{*}$ represents the actual output. 
\begin{figure*}[!t]
	\centering
	\includegraphics[width=3.5in]{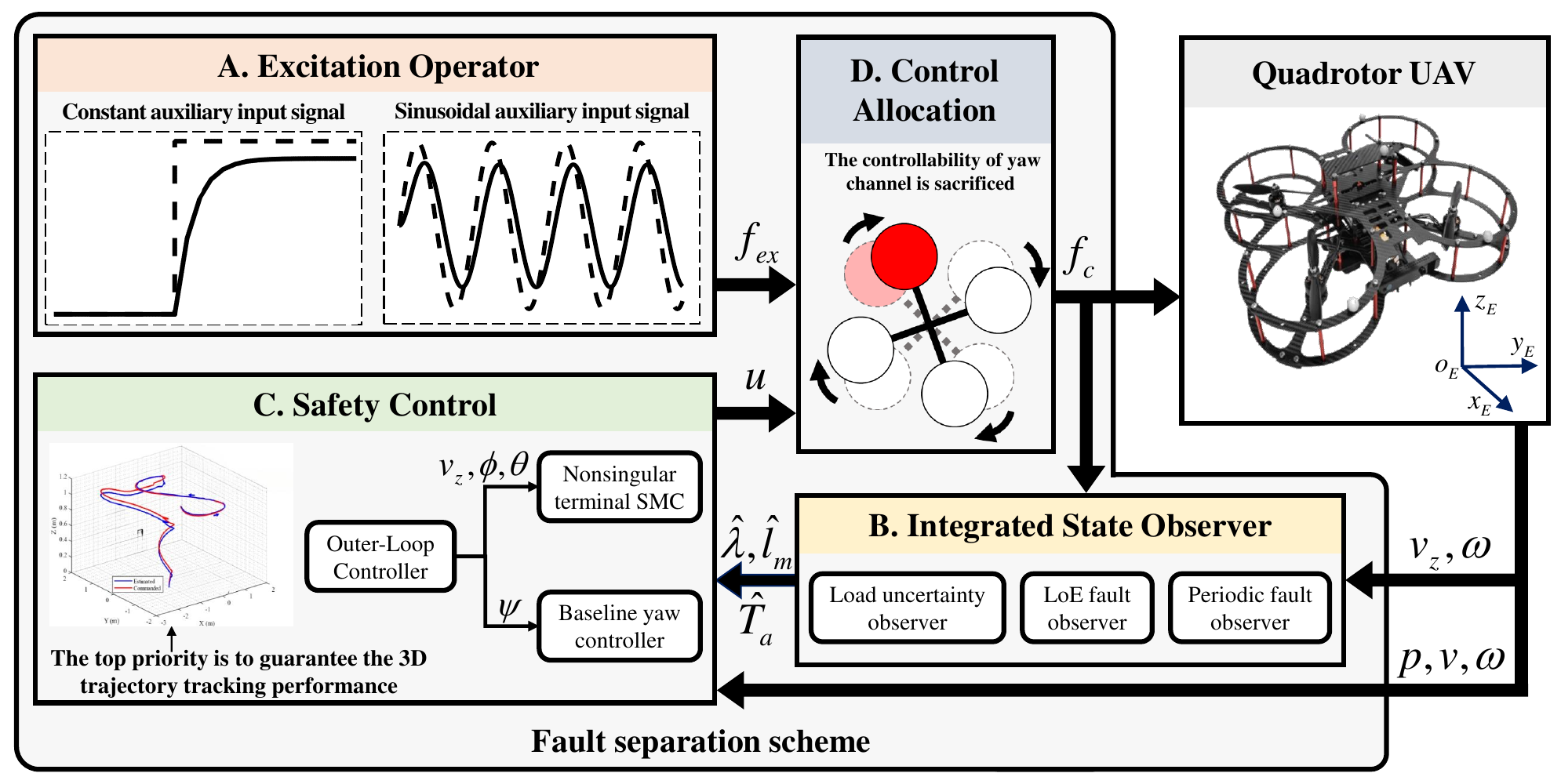}
	\caption{ System architecture of the proposed fault separation control for a Quadrotor UAV.}
	\label{fig_2}
\end{figure*}
 Consider the state vector $x_1={{\left[ \begin{matrix}
			{{v}_{z}} & p & q   \\
		\end{matrix} \right]}^{T}}$.  Based on the nonlinear model Eq. (\ref{eq_2}), Eq. (\ref{eq_6}), load uncertainty Eq. (\ref{eq_5_1}), and actuator fault model Eq. (\ref{eq_7_1}), it is rendered that
	\begin{equation}
		\label{eq_8}
		\left\{ \begin{array}{*{35}{l}}
			{{\left[ \begin{matrix}
						\dot{x_1} & \dot{r}   \\
					\end{matrix} \right]}^{T}}
				=f\left( x_1,r,t \right)\\
				\qquad \qquad \qquad+B{{R}_{u}} {{\Gamma }}{{\left[ \begin{matrix}
						{{f}_{1}}' & {{f}_{2}}' & {{f}_{3}}' & {{f}_{4}}'  \\
					\end{matrix} \right]}^{T}}+\xi  \\
			{{{{f_i}'}}}={{\left( \frac{1}{\left( {{T}_{a}}+{{T}_{c}} \right)s+1} \right)}^{2}}f_{i}^{c}, \left( i=1,\ldots ,4 \right)   \\
		\end{array} \right.,
	\end{equation}
	where $B=\Theta \left( \frac{\cos \phi \cos \theta }{m}, \frac{1}{{{J}_{x}}}, \frac{1}{{{J}_{y}}}, \frac{1}{{{J}_{z}}}\right)$. $\xi ={{\left[ \begin{matrix}
				0 & {{\xi }_{\omega }}  \\
			\end{matrix} \right]}^{T}}$ represents the load uncertainty. The known nonlinearity $f\left( x,t \right)$ is defined as
\begin{equation}
	\label{eq_9}
	f\left(  x_1,r,t \right)=\left[ \begin{matrix}
		-g  \\
		\frac{{{J}_{y}}-{{J}_{z}}}{{{J}_{x}}}qr  \\
		\frac{{{J}_{z}}-{{J}_{x}}}{{{J}_{y}}}pr  \\
		\frac{{{J}_{x}}-{{J}_{y}}}{{{J}_{z}}}pq  \\
	\end{matrix} \right].
\end{equation}
\newtheorem{assumption}{Assumption}
\begin{assumption}\label{assumption0}
	The LoE fault $\lambda_{i}$ and load uncertainty $l_m$ are considered as slowly changing values after occurrence, that is, $\dot{\lambda_{i}}\left(t>t_\lambda \right)=0$ and $\dot{l_m}\left(t>t_{l_m} \right)=0$
\end{assumption}

Actually, the physical structure of a quadrotor UAV is solid during mission. Thus, the root terms of LoE fault $\lambda_{i}$ and load uncertainty $l_m$ can be considered as constant values due to the relationship among actuator fault, load uncertainty, and system state is clarified. Moreover, based on Eqs. (\ref{eq_5_1}) and (\ref{eq_6_F1}), the upper bounds of $\lambda_{i}$ and $l_m$ can be given as
\begin{equation}
	\label{eq_12+1}
\left\{ \begin{aligned}
	&\lambda_{i} \le 1,\\
&	l_m \le \frac{1}{2}[L,W,H]^{T},
\end{aligned} \right.
\end{equation}
where $ [L,W,H]$ represents the maximum length, width, and height of a quadrotor UAV.

\begin{rem} \label{remark0}
As compared to the existing FDD research of quadrotot UAV focusing on the composite estimation of actuator faults, a deep coupling model of LoE faults, aging, and system uncertainty is established in this study. This step can pave a solid foundation for auxiliary input signals and FDD design, since the actuator faults information can be fully utilized.
\end{rem}

\section{FAULT SEPARATION SCHEME}
In this section, the design procedure of control system is presented, while the architecture is illustrated in Fig. \ref{fig_2}. The proposed fault separation scheme consists of an excitation operator, integrated state observer, a safety controller, and control allocation module. 



\subsection{Auxiliary Input Signal Design}
Injecting an appropriate auxiliary input signal into the system can amplify the impact of single fault or reduce the correlation between faults and uncertainties. Therefore, it  is an effective means to improve the FDD accuracy. Physical limits of actuator are common reasons for specifying constraints. Other reasons include ensuring that the auxiliary input signal does not affect flight safety while monitoring the fault status. 

Focusing on auxiliary input signal design, two factors need to be explicitly considered: 1) the physical limit of the actuator; and 2) the tracking performance for the desired reference. Practically, the auxiliary input signal is injected into each rotor. 

\textbf{Factor 1) } The physical limit of actuator.

 The amplitude of auxiliary input signal needs to meet the physical limit of the actuator. In addition, the balance of the quadrotor UAV in $z_{I}$-axis direction should be guaranteed. Therefore, the constraint on auxiliary input signal can be represented as
	\begin{equation}
	\label{eq_11}
	\max \left( {{\Delta }_{z}}-3{{f}_{\max }},0 \right)\le {{f}_{ex}}\le {{f}_{\max }},
\end{equation}
where the term ${{\Delta }_{z}}=\frac{mg}{\cos \phi \cos \theta }+m{z}^{(2)}_d$, ${z}^{(2)}_d$ refers to the maximum 2-nd order derivative of the desired reference in $z_{I}$-axis direction. $f_{ex}$ denotes any of the auxiliary input signals, which can be collectively referred to as the excitation operator.

\textbf{Factor 2) } The tracking performance.

Based on the dynamics of quadrotor UAV, the total thrust $F_m$ is determined by the desired acceleration, which is considered in Factor 1. Meanwhile, the control torques $M_x$, $M_y$, and $M_z$ are mainly influenced by the desired snap, which represents the 4-th order derivative of the trajectory. 

In order to obtain the maneuvering condition between the snap and the control torques, Eq. (\ref{eq_2}) is linearized at the equilibrium point. Thus, on the basis that the whole (or a period of time) desired reference is known,  the relationship between the snap and the control torque can be calculated as
	\begin{equation}
	\label{eq_14}
\left[ \begin{array}{*{35}{l}}
	x_{d}^{(4)}  \\
	y_{d}^{(4)}  \\
\end{array} \right]=R_t\left[ \begin{array}{*{35}{l}}
	{{M}_{x}}  \\
	{{M}_{y}}  \\
\end{array} \right],
\end{equation}
where $x_{d}^{(4)} $ and $y_{d}^{(4)}$ refer to the maximum 4-th order derivatives of the desired reference in $x_{I}$-axis and $y_{I}$-axis directions. 
\begin{equation*}
	R_t=\left[ \begin{matrix}
		-g{{S}_{\psi }}/{{J}_{x}} & -g{{C}_{\psi }}/{{J}_{y}}  \\
		g{{C}_{\psi }}/{{J}_{x}} & -g{{S}_{\psi }}/{{J}_{y}}  \\
	\end{matrix} \right].
\end{equation*}

Suppose M1 as the target rotor injected with auxiliary input signal. By ignoring the yaw channel, Eq. (\ref{eq_8}) under the condition of auxiliary input signal injection is rewritten as
	\begin{equation}
	\label{eq_12}
\left\{ \begin{aligned}
	& \dot{x}_1=\tilde{f}\left( x_1 \right)+\tilde{B}{{{\tilde{R}}}_{u}}{{\left[ \begin{matrix}
				{{\lambda }_{1}}{{{{f}'}}_{ex}} &{{{{f}'}}_{2}} & {{{{f}'}}_{3}} & {{{{f}'}}_{4}}  \\
			\end{matrix} \right]}^{T}}+\tilde{\xi } \\ 
	& {{f}_{i}}^{\prime }={{\left( \frac{1}{T_{a}s+1} \right)}^{2}}f_{i}^{c},\left( i=2,3,4 \right) \\ 
	& {{f}_{ex}}^{\prime }={{\left( \frac{1}{\left( {{T}_{a}}+{{T}_{c}} \right)s+1} \right)}^{2}}f_{ex}^{c} \\ 
\end{aligned} \right.,
\end{equation}
where $	\tilde{f}\left( x_1 \right)=\left[ \begin{matrix}
	-g ,	\frac{{{J}_{y}}-{{J}_{z}}}{{{J}_{x}}}qr  ,	\frac{{{J}_{z}}-{{J}_{x}}}{{{J}_{y}}}pr  
\end{matrix} \right]^{T}$, $\tilde{\xi }=[0,\xi _{\omega,1},\xi _{\omega,2}]^{T}$, $\tilde{B}=\Theta \left( \frac{\cos \phi \cos \theta }{m}, \frac{1}{{{J}_{x}}}, \frac{1}{{{J}_{y}}} \right)$, and 
\begin{equation*}
	\tilde{R}_u=\left[ \begin{matrix}
		1 & 1 & 1 & 1  \\
		-\frac{{{d}_{\phi }}}{2} & -\frac{{{d}_{\phi }}}{2} & \frac{{{d}_{\phi }}}{2} & \frac{{{d}_{\phi }}}{2}  \\
		\frac{{{d}_{\theta }}}{2} & -\frac{{{d}_{\theta }}}{2} & \frac{{{d}_{\theta }}}{2} & -\frac{{{d}_{\theta }}}{2} 
	\end{matrix} \right].
\end{equation*}

By combining Eq. (\ref{eq_14}) and Eq. (\ref{eq_12}), one can obtain
	\begin{equation}
	\label{eq_16}
\begin{aligned}
	 \left[ \begin{array}{*{35}{l}}
		x_{d}^{(4)}  \\
		y_{d}^{(4)}  \\
	\end{array} \right]=&{{R}_{t}}\left[ \begin{matrix}
		-\frac{{{d}_{\phi }}}{2} & \frac{{{d}_{\phi }}}{2} & \frac{{{d}_{\phi }}}{2}  \\
		-\frac{{{d}_{\theta }}}{2} & \frac{{{d}_{\theta }}}{2} & -\frac{{{d}_{\theta }}}{2}  \\
	\end{matrix} \right]\left[ \begin{array}{*{35}{l}}
		{{f}_{2}}  \\
		{{f}_{3}}  \\
		{{f}_{4}}  \\
	\end{array} \right] \\ 
	& +{{R}_{t}}\left(\left[ \begin{array}{*{35}{l}}
		-\frac{{{d}_{\phi }}}{2}{{f}_{ex}}  \\
		\frac{{{d}_{\theta }}}{2}{{f}_{ex}}  \\
	\end{array} \right]+\left[ \begin{array}{*{35}{l}}
		{{\xi }_{\omega ,1}}  \\
		{{\xi }_{\omega ,2}}  \\
	\end{array} \right] \right). 
\end{aligned}
\end{equation}

Based on Eq. (\ref{eq_16}) and the physical limit $0\le f_i\le {{f}_{\max }}\left( i=2,\ldots ,4 \right)$, the constraint on $f_{ex}$ can be calculated as 
\begin{equation}
		\label{eq_17}
	\left\{ \begin{aligned}
	-{{f}_{\max }}-{{\Delta }_{x}}& \le {{f}_{ex}}\le 2{{f}_{\max }}-{{\Delta }_{x}} \\ 
	-2{{f}_{\max }}-{{\Delta }_{y}}&\le {{f}_{ex}}\le {{f}_{\max }}-{{\Delta }_{y}}  
\end{aligned} \right.,
\end{equation}
where ${{\Delta }_{x}}=\frac{2}{{{d}_{\phi }}}\left( R_{t}^{-1}{{x_d}^{\left( 4 \right)}}-{{\xi }_{\omega ,1}} \right)$ and ${{\Delta }_{y}}=\frac{2}{{{d}_{\theta }}}\left( R_{t}^{-1}{{y_d}^{\left( 4 \right)}}-{{\xi }_{\omega ,2}} \right)$.

In summary, by combining Eq. (\ref{eq_11}) and Eq. (\ref{eq_17}), the auxiliary input signal must satisfy 
\begin{equation}
	\label{eq_18}
\left\{ \begin{aligned}
	& {{{\underline{f}}}_{ex}}\le {{f}_{ex}}\le {{{\overline{f}}}_{ex}} \\ 
	& {{{\underline{f}}}_{ex}}\text{=}\max \left( -{{f}_{\max }}-{{\Delta }_{x}},-2{{f}_{\max }}-{{\Delta }_{y}},{{\Delta }_{z}}-3{{f}_{\max }},0 \right) \\ 
	& {{{\overline{f}}}_{ex}}=\min \left( 2{{f}_{\max }}-{{\Delta }_{x}},{{f}_{\max }}-{{\Delta }_{y}},{{f}_{\max }} \right) \\ 
\end{aligned} \right..
\end{equation}

On the basis of the constraint on $f_{ex}$ in Eq. (\ref{eq_18}), two kinds of auxiliary input signals, including constant and sinusoidal, are generated by excitation operator. 

\textbf{Constant auxiliary input signal $f^{cons}_{ex}$}. Steady-state response of actuator for constant input can effectively avoid the influence of time constant. On the other hand, although the controllability of yaw channel is sacrificed, small yaw rate is helpful to maintain system safety and tracking performance. Therefore, in order to separate the LoE fault from the actuator aging fault, the constant auxiliary input signal is designed as
\begin{equation}
	\label{eq_19}
    f^{cons}_{ex}={\left( mg+z_{d}^{(2)} \right)}/{4}\;.
\end{equation}

According to the constant steady-state response result, the LoE fault $\lambda_{1}$ can be calculated as
\begin{equation}
	{{\lambda }_{1}}=\frac{z_{1}^{c}}{f_{ex}^{cons}},
\end{equation}
where $z_{1}^{c}$ represents the estimation result of LoE fault observer. It is worth noting that $f^{cons}_{ex}$ needs to meet the constraint of (\ref{eq_18}). 
\begin{figure}[!t]
	\centering
	\includegraphics[width=2.5in]{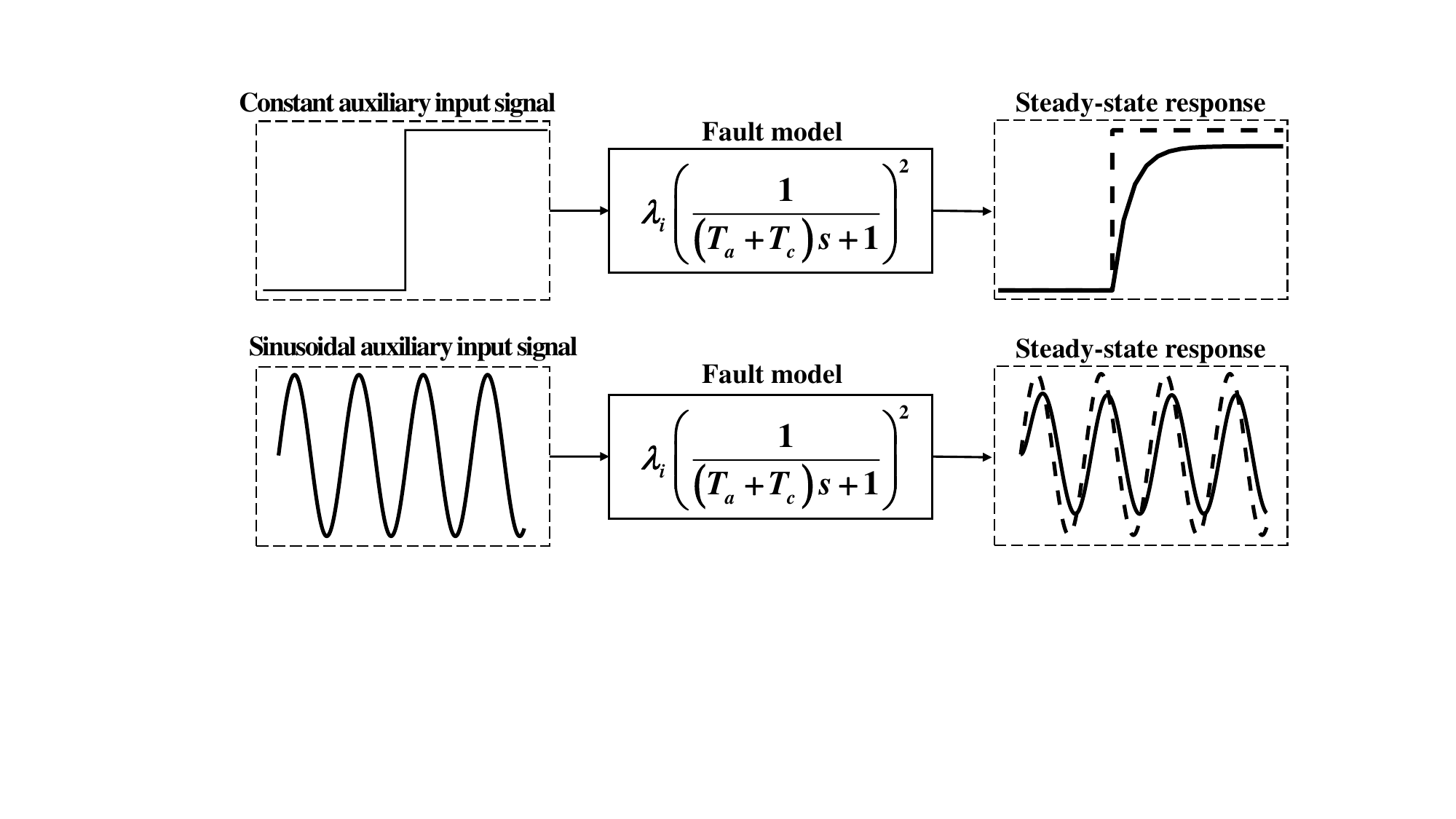}
	\caption{The steady-state response after the constant auxiliary input signal is injected.}
	\label{fig_con_response}
\end{figure}

\textbf{Sinusoidal auxiliary input signal $f^{sin}_{ex}$}. Similar to the constant auxiliary input signal, the steady-state response of a sinusoidal signal can be used to map the influence of unknown time constants onto the amplitude of a periodic signal. This property can be leveraged to achieve the separation and accurate estimation of unknown time constants by injecting sinusoidal auxiliary input signals into the system. Specifically, the sinusoidal auxiliary input signal is designed as follows
\begin{equation}
	\label{eq_21}
	f_{ex}^{\sin }=\frac{{{{\underline{f}}_{ex}}}+{{{\overline{f}}}_{ex}}}{2}\left( \alpha \sin \left( \beta t \right)+1 \right),
\end{equation}
where $\alpha \in (0,1] $ and $\beta$ are the parameters to be designed. Based on the steady-state result, The time constant that increases due to coil aging can be obtained
\begin{equation}
	{{T}_{c}}=\sqrt{\frac{\alpha \lambda_{1} \left( {{{{\underline{f}}_{ex}}}+{{{\overline{f}}}_{ex}}} \right)}{2{{\beta }^{2}}z_{1}^{p}}}-{{T}_{a}},
\end{equation}
where $z_{1}^{p}$ can be estimated by observer designed in \ref{sec1}.
\begin{figure}[!t]
	\centering
	\includegraphics[width=2.5in]{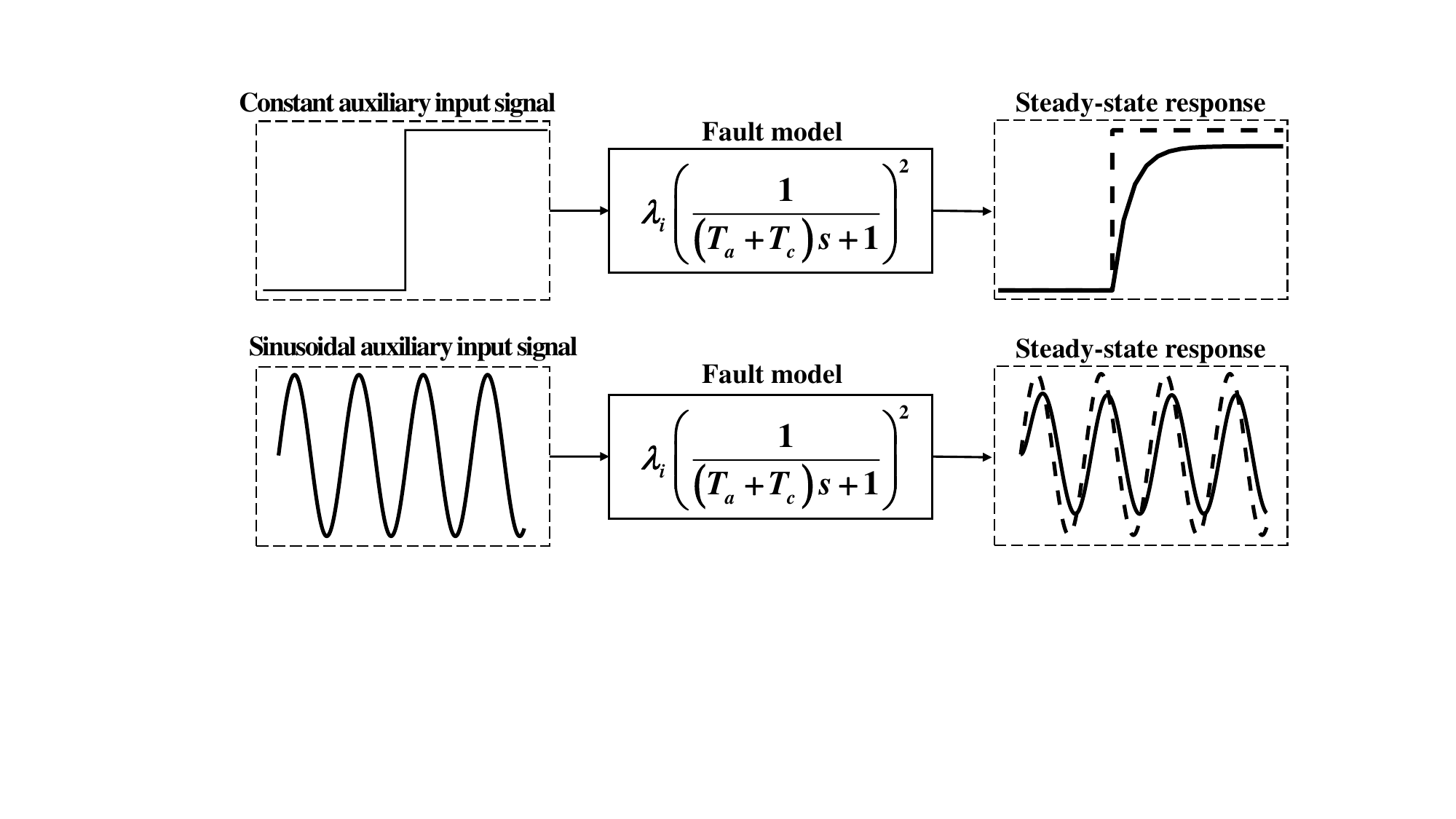}
	\caption{The steady-state response after the sinusoidal auxiliary input signal is injected.}
	\label{fig_sin_response}
\end{figure}
\begin{rem} \label{remark1}
	It is a common approach to calibrate a system according to its steady-state response. The innovations of this paper mainly lie in: Constrain calculation of auxiliary input signal by combining actuator physical limits and tracking performance for desired reference, and how to inject auxiliary input signal on an under-actuated quadrotor UAV system. In fact, the faults separation estimation can be achieved by any constant and sinusoidal auxiliary input signal meeting the constraints. This paper offers a feasible scheme of excitation operator design.
\end{rem}

\begin{rem} \label{remark2}
	As can be seen in Eq. (\ref{eq_8}), the LoE fault, aging, and load uncertainty are working together on the same channel. In particular, there is a coupling relationship between LoE fault and aging according to Eq. (\ref{eq_7_1}). Thus, the separability of LoE fault, aging, and load uncertainty cannot be ensured under the nominal system state and control input. In this paper, the constant auxiliary input signal is adopted to suppress the aging effect. The separability of LoE faults and load uncertainty is easy to be assured since the coupling is avoided. The LoE faults and load uncertainty can be estimated separately as a consequence. Subsequently, the sinusoidal auxiliary input signal is injected. By observing the sinusoidal steady-state response of each single actuator, the estimation of time constant resulting from aging can be achieved. 
\end{rem}


\subsection{Integrated State Observer Design}\label{sec1}
The proposed integrated state observer includes load uncertainty observer, LoE actuator fault observer, and periodic fault observer. With the sequential injection of  the auxiliary input signal, the load uncertainty, LoE actuator fault, and actuator aging can be estimated separately.

\textbf{Step 1)  The injection of constant auxiliary input signal.}

With the aid of constant auxiliary input signal, the influence of time constant caused by actuator aging is avoided. In that case, by recalling Eqs. (\ref{eq_5_1}) and (\ref{eq_8}), one can obtain
	\begin{equation}
	\label{eq_23}
{{\left[ \begin{matrix}
			{{{\dot{v}}}_{z}} & {{{\dot{x}}}_{2}}  \\
		\end{matrix} \right]}^{T}}=f\left( {{v}_{z}},{{x}_{2}} \right)+{B}{R}_{u}f+{{B}^{*}}{{d}_{\Gamma }}+B_{\xi }^{*}{{l}_{m}}, ,
\end{equation}
where the input matrix ${{B}^{*}}=B{{R}_{u}}\Theta ({{f}_{1}},{{f}_{2}},{{f}_{3}},{{f}_{4}})$, and ${{d}_{\Gamma }}={{\left[ \begin{matrix}
			{{\lambda }_{1}}-1 & {{\lambda }_{2}}-1 & {{\lambda }_{3}}-1 & {{\lambda }_{4}}-1  \\
		\end{matrix} \right]}^{T}}$ denotes the LoE fault vector.

The corresponding observers of the load uncertainty observer and LoE fault observer are designed as
\begin{equation}
	\label{eq_24}
\left\{ \begin{matrix}\begin{aligned}
	{{{\dot{z}}}_{1}}=&-{{k}_{1}}{{B}^{*}}{{z}_{1}}-{{k}_{1}}{{B}^{*}}{{p}_{1}}\left( {{v}_{z}},{{x}_{2}} \right)\\
	&-{{k}_{1}}\left[ B_{\xi }^{*}{{z}_{2}}+B_{\xi }^{*}{{p}_{2}}\left( {{v}_{z}},{{x}_{2}} \right)+f\left( {{v}_{z}},{{x}_{2}} \right)+{B}{R}_{u}f \right]  \\
	{{{\hat{d}}}_{\Gamma }}=&{{z}_{1}}+{{p}_{1}}\left( {{v}_{z}},{{x}_{2}} \right)  \\
	{{{\dot{z}}}_{2}}=&-{{k}_{2}}\tilde{B}_{\xi }^{*}{{z}_{2}}-{{k}_{2}}\tilde{B}_{\xi }^{*}{{p}_{2}}\left( {{x}_{2}} \right)\\
	&-{{k}_{2}}\left[ {\tilde{B}^{*}}{{z}_{1}}+{\tilde{B}^{*}}{{p}_{1}}\left( {{x}_{2}} \right)+f\left( {{x}_{2}} \right)+{B}{R}_{u}f \right]  \\
	{{{\hat{l}}}_{m}}=&{{z}_{2}}+{{p}_{2}}\left( {{x}_{2}} \right)  \\
\end{aligned}\end{matrix} \right.
\end{equation}
where $z_1$ and $z_2$ are the internal states of the load uncertainty observer and LoE fault observer. ${{{\hat{d}}}_{\Gamma }}$ and ${{{\hat{l}}}_{m}}$ represent the estimation results of ${{{{d}}}_{\Gamma }}$ and ${{{{l}}}_{m}}$. $\tilde{B}_{\xi }^{*} \in \mathbb{R}^{3 \times 4}$ and $\tilde{B}^{*}\in \mathbb{R}^{3 \times 4}$ indicate the corresponding sub-matrices of ${B}_{\xi }^{*}$ and  ${B}^{*}$. ${{p}_{1}}\left( {{v}_{z}},{{x}_{2}} \right)$ and ${{p}_{2}}\left( {{x}_{2}} \right)$ refer to the nonlinear function to be designed. The observer gains $k_1$ and $k_2$ are determined by
\begin{equation*}
	{{k}_{1}}=\left[ \frac{\partial {{p}_{1}}\left( {{v}_{z}},{{x}_{2}} \right)}{\partial {{v}_{z}}},\frac{\partial {{p}_{1}}\left( {{v}_{z}},{{x}_{2}} \right)}{\partial {{x}_{2}}} \right],\quad {{k}_{2}}=\left[ \frac{\partial {{p}_{2}}\left( {{x}_{2}} \right)}{\partial {{x}_{2}}} \right].
\end{equation*}
\newtheorem{theorem}{Theorem}
\begin{theorem}
	Consider the system described by Eq. (\ref{eq_23}) with the integrated observer Eq. (\ref{eq_24}). The terms of LoE fault $d_{\Gamma}$ and load uncertainty $l_m$ can be estimated separately if there exist observer gains $k_1$ and $k_2$ such that the maximum eigenvalues of the matrices $\left( -{{k}_{1}}{{B}^{*}}+k_{1}^{T}{{k}_{1}}+{{\left( {{{\tilde{B}}}^{*}} \right)}^{T}}{{{\tilde{B}}}^{*}} \right)$ and $\left( -{{k}_{2}}\tilde{B}_{\xi }^{*}+k_{2}^{T}{{k}_{2}}+{{\left( B_{\xi }^{*} \right)}^{T}}B_{\xi }^{*} \right)$ have negative real parts.
\end{theorem}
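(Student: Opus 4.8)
The plan is to reduce the claim to the exponential stability of a coupled \emph{linear} error system and then to a pair of matrix conditions. First I would introduce the two estimation errors $e_1 = d_\Gamma - \hat d_\Gamma$ and $e_2 = l_m - \hat l_m$, and invoke Assumption~\ref{assumption0} so that $\dot d_\Gamma = 0$ and $\dot l_m = 0$ once the fault and the load offset have settled. The key feature of the observer construction is the choice $k_1 = \partial p_1/\partial x$ and $k_2 = \partial p_2/\partial x_2$, which makes $\dot p_1 = k_1 \dot x$ and $\dot p_2 = k_2 \dot x_2$ reproduce exactly the unknown drift that appears in the plant Eq.~(\ref{eq_23}). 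This is what allows the nonlinear observer to be analyzed as a clean linear error problem.

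Next I would differentiate $e_1 = d_\Gamma - z_1 - p_1$, substitute $\dot z_1$ from Eq.~(\ref{eq_24}) and $\dot x$ from Eq.~(\ref{eq_23}), and use the identities $z_1 + p_1 = \hat d_\Gamma = d_\Gamma - e_1$ and $z_2 + p_2 = \hat l_m = l_m - e_2$. The known terms $f + B R_u f$ together with the true values $B^* d_\Gamma$ and $B_\xi^* l_m$ then cancel, leaving the coupled error dynamics
\begin{equation*}
\dot e_1 = -k_1 B^* e_1 - k_1 B_\xi^* e_2, \qquad \dot e_2 = -k_2 \tilde B^* e_1 - k_2 \tilde B_\xi^* e_2,
\end{equation*}
where the sub-matrices $\tilde B^*$ and $\tilde B_\xi^*$ enter because the load observer reads only the rotational channels and not $v_z$. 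Repeating the computation for $e_2$ confirms the symmetric structure.

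I would then take the Lyapunov candidate $V = \tfrac12 e_1^T e_1 + \tfrac12 e_2^T e_2$ and compute $\dot V = -e_1^T k_1 B^* e_1 - e_2^T k_2 \tilde B_\xi^* e_2$ minus the two cross terms $e_1^T k_1 B_\xi^* e_2$ and $e_2^T k_2 \tilde B^* e_1$. Bounding each cross term by Young's inequality, split so that $e_1$ collects $k_1^T k_1$ and $(\tilde B^*)^T \tilde B^*$ while $e_2$ collects $k_2^T k_2$ and $(B_\xi^*)^T B_\xi^*$, yields $\dot V \le e_1^T M_1 e_1 + e_2^T M_2 e_2$ with $M_1$ and $M_2$ precisely the two matrices in the statement (up to absorbing benign scalar weights into $V$). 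Negative definiteness of their symmetric parts, which is the "negative maximum eigenvalue" condition read on the symmetrized forms, then gives $\dot V < 0$ and exponential decay of $(e_1,e_2)$ to zero, i.e. the claimed separate estimation of $d_\Gamma$ and $l_m$.

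The main obstacle I expect is twofold. The first is getting the term-by-term cancellation in the error derivation exactly right, especially tracking which rows of $B^*$ and $B_\xi^*$ survive as the tilde sub-matrices once $v_z$ is dropped from the load observer, since a bookkeeping error there would break the coupling structure that the two matrix conditions are built on. The second is selecting the Young's-inequality weights so that the residual quadratic forms coalesce into exactly $M_1$ and $M_2$ rather than a scaled variant, and interpreting the eigenvalue hypothesis through the symmetric part of each (generally non-symmetric) matrix so that the Lyapunov bound genuinely closes.
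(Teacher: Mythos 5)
Your proposal is correct and follows essentially the same route as the paper: derive the coupled linear error dynamics $\dot{\tilde{d}}_\Gamma = -k_1 B^* \tilde{d}_\Gamma - k_1 B_\xi^* \tilde{l}_m$, $\dot{\tilde{l}}_m = -k_2 \tilde{B}^* \tilde{d}_\Gamma - k_2 \tilde{B}_\xi^* \tilde{l}_m$, take the quadratic Lyapunov function $V_0 = \tfrac{1}{2}\tilde{d}_\Gamma^T \tilde{d}_\Gamma + \tfrac{1}{2}\tilde{l}_m^T \tilde{l}_m$, and bound the two cross terms by Young's inequality with exactly the split you describe, yielding the two matrices in the statement. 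Your closing caveat about reading the eigenvalue hypothesis on the symmetric parts is well taken --- the paper glosses over this (and also silently drops the $0.5$ weights, which is harmless since the added quadratic forms are positive semidefinite) --- but these are refinements of, not departures from, the paper's argument.
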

\begin{proof}
	A Lyapunov candidate is chosen as
\begin{equation}
	\label{eq_25}
	V_0=0.5\tilde{d}_{\Gamma }^{T}{{\tilde{d}}_{\Gamma }}+0.5\tilde{l}_{m}^{T}{{\tilde{l}}_{m}},
\end{equation}
where the observer estimation errors ${{\tilde{d}}_{\Gamma }}={{\hat{d}}_{\Gamma }}-d_{\Gamma}$, and ${{\tilde{l}}_{m }}={{\hat{l}}_{m }}-l_{m}$. By combining Eq. (\ref{eq_24}), the time-derivative of estiamtion error ${{\tilde{d}}_{\Gamma }}$ can be given as
\begin{equation}
	\label{eq_26}
\begin{aligned}
	{{{\dot{\tilde{d}}}}_{\Gamma }}=& {{{\dot{\hat{d}}}}_{\Gamma }}-{{{\dot{d}}}_{\Gamma }}={{{\dot{z}}}_{1}}+\left[ \frac{\partial {{p}_{1}}\left( {{v}_{z}},{{x}_{2}} \right)}{\partial {{v}_{z}}},\frac{\partial {{p}_{1}}\left( {{v}_{z}},{{x}_{2}} \right)}{\partial {{x}_{2}}} \right]\left[ \begin{matrix}
		{{{\dot{v}}}_{z}}  \\
		{{{\dot{x}}}_{2}}  \\
	\end{matrix} \right] \\ 
	 =&-{{k}_{1}}{{B}^{*}}{{z}_{1}}-{{k}_{1}}{{B}^{*}}{{p}_{1}}\left( {{v}_{z}},{{x}_{2}} \right)\\
	 &-{{k}_{1}}\left[ B_{\xi }^{*}{{z}_{2}}+B_{\xi }^{*}{{p}_{2}}\left( {{v}_{z}},{{x}_{2}} \right)+f\left( {{v}_{z}},{{x}_{2}} \right) \right]+{{k}_{1}}{{\left[ \begin{matrix}
				{{{\dot{v}}}_{z}} & {{{\dot{x}}}_{2}}  \\
			\end{matrix} \right]}^{T}} \\ 
	 =&-{{k}_{1}}{{B}^{*}}{{{\tilde{d}}}_{\Gamma }}-{{k}_{1}}B_{\xi }^{*}{{{\tilde{l}}}_{m}}. \\ 
\end{aligned}
\end{equation}

In an analogous manner, the time-derivative of estimation error ${{\tilde{l}}_{m }}$ can be obtained
\begin{equation}
		\label{eq_27}
	\begin{aligned}
		 {{{\dot{\tilde{l}}}}_{m}}&={{{\dot{\hat{l}}}}_{m}}-{\hat{l}_{m}}={{{\dot{z}}}_{2}}+\frac{\partial {{p}_{2}}\left( {{x}_{2}} \right)}{\partial {{x}_{2}}}{{{\dot{x}}}_{2}} \\ 
		 =&-{{k}_{2}}\tilde{B}_{\xi }^{*}{{z}_{2}}-{{k}_{2}}\tilde{B}_{\xi }^{*}{{p}_{2}}\left( {{x}_{2}} \right)\\
		 &-{{k}_{2}}\left[ {{{\tilde{B}}}^{*}}{{z}_{1}}+{{{\tilde{B}}}^{*}}{{p}_{1}}\left( {{x}_{2}} \right)+f\left( {{x}_{2}} \right) \right]+{{k}_{2}}{{{\dot{x}}}_{2}} \\ 
		 =&-{{k}_{2}}\tilde{B}_{\xi }^{*}{{{\tilde{l}}}_{m}}-{{k}_{2}}{\tilde{B}^{*}}{{{\tilde{d}}}_{\Gamma }}, \\ 
	\end{aligned}
\end{equation}

Differentiating $V_0$ yields
\begin{equation}
	\label{eq_28}
	\begin{aligned}
		 {{{\dot{V}}}_{0}}=&\tilde{d}_{\Gamma }^{T}{{{\dot{\tilde{d}}}}_{\Gamma }}+\tilde{l}_{m}^{T}{{{\dot{\tilde{l}}}}_{m}}\\
		 =&\tilde{d}_{\Gamma }^{T}\left( -{{k}_{1}}{{B}^{*}}{{{\tilde{d}}}_{\Gamma }}-{{k}_{1}}B_{\xi }^{*}{{{\tilde{l}}}_{m}} \right)+\tilde{l}_{m}^{T}\left( -{{k}_{2}}\tilde{B}_{\xi }^{*}{{{\tilde{l}}}_{m}}-{{k}_{2}}{{{\tilde{B}}}^{*}}{{{\tilde{d}}}_{\Gamma }} \right) \\ 
		 =&-\tilde{d}_{\Gamma }^{T}{{k}_{1}}{{B}^{*}}{{{\tilde{d}}}_{\Gamma }}-\tilde{d}_{\Gamma }^{T}{{k}_{1}}B_{\xi }^{*}{{{\tilde{l}}}_{m}}-\tilde{l}_{m}^{T}{{k}_{2}}\tilde{B}_{\xi }^{*}{{{\tilde{l}}}_{m}}-\tilde{l}_{m}^{T}{{k}_{2}}{{{\tilde{B}}}^{*}}{{{\tilde{d}}}_{\Gamma }}. \\ 
		 	\end{aligned}
		\end{equation}
	Moreover,  the following inequalities hold
\begin{equation}
		\label{eq_29}
	\begin{aligned}
		 -\tilde{d}_{\Gamma }^{T}{{k}_{1}}B_{\xi }^{*}{{{\tilde{l}}}_{m}}\le& 0.5{{\left( \tilde{d}_{\Gamma }^{T}{{k}_{1}} \right)}^{T}}\left( \tilde{d}_{\Gamma }^{T}{{k}_{1}} \right)+0.5{{\left( B_{\xi }^{*}{{{\tilde{l}}}_{m}} \right)}^{T}}\left( B_{\xi }^{*}{{{\tilde{l}}}_{m}} \right) \\ 
		\le&  0.5\tilde{d}_{\Gamma }^{T}k_{1}^{T}{{k}_{1}}{{{\tilde{d}}}_{\Gamma }}+0.5\tilde{l}_{m}^{T}{{\left( B_{\xi }^{*} \right)}^{T}}B_{\xi }^{*}{{{\tilde{l}}}_{m}}, \\ 
	\end{aligned}
\end{equation}	
\begin{equation}
		\label{eq_30}
\begin{aligned}
	 -\tilde{l}_{m}^{T}{{k}_{2}}{{{\tilde{B}}}^{*}}{{{\tilde{d}}}_{\Gamma }}\le& 0.5{{\left( \tilde{l}_{m}^{T}{{k}_{2}} \right)}^{T}}\left( \tilde{l}_{m}^{T}{{k}_{2}} \right)+0.5{{\left( {{{\tilde{B}}}^{*}}{{{\tilde{d}}}_{\Gamma }} \right)}^{T}}\left( {{{\tilde{B}}}^{*}}{{{\tilde{d}}}_{\Gamma }} \right) \\ 
	 \le &0.5\tilde{l}_{m}^{T}k_{2}^{T}{{k}_{2}}{{{\tilde{l}}}_{m}}+0.5\tilde{d}_{\Gamma }^{T}{{\left( {{{\tilde{B}}}^{*}} \right)}^{T}}{{{\tilde{B}}}^{*}}{{{\tilde{d}}}_{\Gamma }}. \\ 
\end{aligned}
\end{equation}		
	
	Substituting the inequalities (\ref{eq_29}) and (\ref{eq_30}) into Eq. (\ref{eq_28}) gives
	
\begin{equation}
	\label{eq_31}
	\begin{aligned}		 
		{{{\dot{V}}}_{0}} \le& -\tilde{d}_{\Gamma }^{T}{{k}_{1}}{{B}^{*}}{{{\tilde{d}}}_{\Gamma }}-\tilde{l}_{m}^{T}{{k}_{2}}\tilde{B}_{\xi }^{*}{{{\tilde{l}}}_{m}}+0.5\tilde{d}_{\Gamma }^{T}k_{1}^{T}{{k}_{1}}{{{\tilde{d}}}_{\Gamma }}\\
		 &+0.5\tilde{l}_{m}^{T}{{\left( B_{\xi }^{*} \right)}^{T}}B_{\xi }^{*}{{{\tilde{l}}}_{m}}+0.5\tilde{l}_{m}^{T}k_{2}^{T}{{k}_{2}}{{{\tilde{l}}}_{m}}\\
		 &+0.5\tilde{d}_{\Gamma }^{T}{{\left( {{{\tilde{B}}}^{*}} \right)}^{T}}{{{\tilde{B}}}^{*}}{{{\tilde{d}}}_{\Gamma }} \\ 
		=& \tilde{d}_{\Gamma }^{T}\left( -{{k}_{1}}{{B}^{*}}+k_{1}^{T}{{k}_{1}}+{{\left( {{{\tilde{B}}}^{*}} \right)}^{T}}{{{\tilde{B}}}^{*}} \right){{{\tilde{d}}}_{\Gamma }}\\
		&+\tilde{l}_{m}^{T}\left( -{{k}_{2}}\tilde{B}_{\xi }^{*}+k_{2}^{T}{{k}_{2}}+{{\left( B_{\xi }^{*} \right)}^{T}}B_{\xi }^{*} \right){{{\tilde{l}}}_{m}} .\\ 
	\end{aligned}
\end{equation}
Apparently, $\dot{V_0} \le 0$ can be guaranteed if the maximum eigenvalues of the matrices $\left( -{{k}_{1}}{{B}^{*}}+k_{1}^{T}{{k}_{1}}+{{\left( {{{\tilde{B}}}^{*}} \right)}^{T}}{{{\tilde{B}}}^{*}} \right)$ and $\left( -{{k}_{2}}\tilde{B}_{\xi }^{*}+k_{2}^{T}{{k}_{2}}+{{\left( B_{\xi }^{*} \right)}^{T}}B_{\xi }^{*} \right)$ have negative real parts, which completes the proof.
\end{proof}

\textbf{Step 2)  The injection of sinusoidal auxiliary input signal.}

In that case, the LoE fault is estimated separately in Step 1, and is considered known in Step 2. Therefore, by recalling Eqs. (\ref{eq_8}) and (\ref{eq_21}), one can achieve
	\begin{equation}
	\label{eq_25}
\left\{ \begin{matrix}\begin{aligned}
	{{\left[ {{{\dot{v}}}_{z}},{{{\dot{x}}}_{2}} \right]}^{T}}=f\left( {{v}_{z}},{{x}_{2}} \right)+B{{R}_{u}}\Gamma {{d}_{a}}+B_{\xi }^{*}{{l}_{m}}  \\
	{{d}_{a}}=\frac{A}{2\left( {{T}^{2}}{{\beta }^{2}}+1 \right)}\sin \left( \beta t-2\arctan \left( T\beta  \right) \right)  \\
\end{aligned}\end{matrix} \right.,
\end{equation}
where $A=0.5\alpha \left({{{{\underline{f}}_{ex}}}+{{{\overline{f}}}_{ex}}}\right)$. In addition, aging faults in the motor and load uncertainty can be estimated by analyzing the periodic signals with known frequencies and constant signals, respectively, that are generated by the injection of sinusoidal auxiliary input signals. The fault observer is designed as

\begin{equation}
	\label{eq_26}
	\left\{ \begin{matrix}\begin{aligned}
			   {{{\dot{z}}}_{3}}=&\left[ \beta -{{k}_{3}}B{{R}_{u}}\Gamma A  \right]{{z}_{3}}+\beta {{p}_{3}}\left( {{v}_{z}},{{x}_{2}} \right)\\
			   &-{{k}_{3}}\left[ B{{R}_{u}}\Gamma A {{p}_{3}}\left( {{v}_{z}},{{x}_{2}} \right)+B_{\xi }^{*}{{p}_{2}}\left( {{v}_{z}},{{x}_{2}} \right)+f\left( {{v}_{z}},{{x}_{2}} \right) \right]  \\
			{{{\hat{d}}}_{a}}=&\alpha \left( {{z}_{3}}+{{p}_{3}}\left( {{v}_{z}},{{x}_{2}} \right) \right)  \\
	\end{aligned}\end{matrix} \right.
\end{equation}
where $z_3$ is the estimation state of the designed observer, ${{{\hat{d}}}_{a }}$ represents the estimation result of ${{{{d}}}_{a }}$. The observer gain $k_3$ is determined as
\begin{equation*}
	{{k}_{3}}=\left[ \frac{\partial {{p}_{3}}\left( {{v}_{z}},{{x}_{2}} \right)}{\partial {{v}_{z}}},\frac{\partial {{p}_{3}}\left( {{v}_{z}},{{x}_{2}} \right)}{\partial {{x}_{2}}} \right].
\end{equation*}
The separability of a periodic signal with known frequencies and a constant signal using DO has been discussed and proved in previous work \cite{guo2020multiple}. The details substantiating the separation estimation between signals from an aging system fault and signals from load uncertainty are excluded from this paper due to space limitations.

\subsection{Safety Controller Design}

The quadrotor UAV is a typical underactuated system that cannot achieve simultaneous control of the position state $p_E$ and attitude state ${{\Omega }{b}}$.  However, injecting an excitation operator to the target rotor can result in a loss of controllability. To overcome this issue, a safety controller is designed to ensure the stability of the system by sacrificing the controllability of the yaw channel. A nonsingular terminal sliding mode control (SMC) is then designed to guarantee the tracking performance of the pitch and roll channels. Finally, the excitation operator is injected into the control allocation module to guarantee 3D trajectory tracking performance.

\textbf{The nonsingular terminal SMC}: The controllability of the yaw channel is sacrified when the excitation operator is injected. Aiming at the roll and pitch channels, define the state vectors ${x_3}={{[ \begin{matrix}
			\varphi  & \theta   \\
		\end{matrix} ]^{T}}}$ and control input $u_1={{[ \begin{matrix}
		{{M }_{x}} & {{M}_{y}}  \\
	\end{matrix} ]^{T}}}$. The nonlinear quadrotor UAV model (\ref{eq_2}) can be expressed as
\begin{equation}
	\label{eq_37}
	\left\{ \begin{matrix}\begin{aligned}
			{{{\dot{x}}}_{3}}&={{x}_{4}}+{{d}_{\psi }}  \\
			{{{\dot{x}}}_{4}}&=a\left( x \right)+bu_1+{{d}_{0}}  \\
	\end{aligned}\end{matrix} \right.,
\end{equation}
where ${x_3}={{[ \begin{matrix}
			p & q  \\
		\end{matrix} ]^{T}}}$ denotes the angular rates along $x$ and $y$ axes. $b=\Theta \left( {1}/{{{I}_{x}}}\;,{1}/{{{I}_{y}}}\right)$ represents the input matrix. $d_\psi$ is the mismatched disturbance resulting from yaw rotation. $d_0={{B}^{*}}{{d}_{\Gamma }}+B_{\xi }^{*}{{l}_{m}}$ refers to the composite term of actuator faults, aging, and load uncertainty, which can be estimated through the proposed excitation operator. $a(x)={{\left[ 
		\left( {{I}_{y}}-{{I}_{z}} \right)qr/{{I}_{x}}\  \quad \left( {{I}_{z}}-{{I}_{x}} \right)pr/{{I}_{y}}  \right]}^{T}}$ refers to the  nonlinear function. 

\begin{assumption}\label{assumption1}
	The mismatched disturbance $d_\psi$ is differentiable, and ${d_\psi}$ has a Lipschitz constant $L$.
\end{assumption}

Based on Assumption \ref{assumption1}, a finite-time DO, proposed by \cite{li2014disturbance}, is used to estimate $d_\psi$ in this paper.
\begin{equation}
	\label{eq_38}
	\left\{ \begin{aligned}
		& {{{\dot{z}}}_{0}}={{v}_{0}}+{{x}_{4}},\quad {{{\dot{z}}}_{1}}={{v}_{1}} \\ 
		& {{v}_{0}}=-{{\lambda }_{0}}{{L}^{{1}/{2}\;}}{{\left| {{z}_{0}}-{{x}_{3}} \right|}^{{1}/{2}\;}}\text{sgn} \left( {{z}_{0}}-{{x}_{3}} \right)+{{z}_{1}} \\ 
		& {{v}_{1}}=-{{\lambda }_{1}}L\text{sgn} \left( {{z}_{1}}-{{v}_{0}} \right) \\ 
		& {{{\hat{x}}}_{3}}={{z}_{0}},\quad {{{\hat{d}}}_{\psi }}={{z}_{1}} \\ 
	\end{aligned} \right.,
\end{equation}
where ${{\lambda }_{i}}>0\left( i=1,2 \right) $ represents the observer gain. 

Define the tracking error of Eq. (\ref{eq_37}) $e_3=x_3-x_{3,d}$ and $e_4=x_4-x_{4,d}$. According to the estimation result of Eq. (\ref{eq_38}), the terminal sliding surface is designed as
\begin{equation}
	\label{eq_39}
	s={{e}_{3}}+{\gamma }{{\left( {{e}_{4}}+{{{\hat{d}}}_{\psi }} \right)}^{{\varepsilon_1}/{\varepsilon_2}\;}},
\end{equation}
where $\gamma$ is a positive constant diagonal matrix. $\varepsilon_1$ and $\varepsilon_2$ represent the positive constant to be designed satisfying $1<{\varepsilon_1}/{\varepsilon_2}<2$.

According to the designed sliding mode surface (\ref{eq_39}), by resorting to the nonsingular terminal SMC, the control laws can be designed as
\begin{equation}
	\label{eq_40}
	\begin{aligned}
		{{u}_{1}}=&-{{b}^{-1}}\left[ a\left( x \right)+{{\gamma }^{-1}}\frac{{{\varepsilon }_{2}}}{{{\varepsilon }_{1}}}{{\left( {{e}_{4}}+{{{\hat{d}}}_{\psi }} \right)}^{2-{{\varepsilon }_{1}}/{{\varepsilon }_{2}}}}+{{{\hat{d}}}_{0}}\right]\\
		&-{{b}^{-1}}\left[-{{v}_{1}}-k\text{sgn} \left( s \right)\right] \end{aligned} 
\end{equation}
where $\varepsilon_3 \in \left( 0 \quad 1 \right)$ is the designed positive constant.

\begin{theorem}
	Consider the dynamic model of the quadrotor UAV Eq. (\ref{eq_37}) and Assumption \ref{assumption1}. Based on the proposed control law (\ref{eq_21}), the tracking error $e_3$ and $e_4$ will converge into zero, and the closed-loop system Eq. (\ref{eq_37}) is input-to-state stable.
\end{theorem}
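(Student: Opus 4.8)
The plan is to decompose the closed-loop behaviour into three nested layers --- the finite-time disturbance observer, the reaching phase of the sliding variable, and the sliding phase on the terminal manifold --- and then to package the residual estimation errors into an input-to-state-stability (ISS) argument. First I would invoke the finite-time exactness of the super-twisting observer (\ref{eq_38}): under Assumption \ref{assumption1} ($d_\psi$ differentiable with Lipschitz constant $L$), the result of \cite{li2014disturbance} guarantees a finite time $T_0$ after which $\hat{d}_\psi\equiv d_\psi$ and $\hat{x}_3\equiv x_3$, so the estimation error $\tilde d_\psi=\hat d_\psi-d_\psi$ and its relevant derivative vanish. Taking $x_{4,d}=\dot x_{3,d}$, for $t\ge T_0$ the first error equation becomes exact, $\dot e_3=e_4+\hat d_\psi$, and the sliding variable (\ref{eq_39}) reduces to the standard nonsingular terminal form $s=e_3+\gamma\,\dot e_3^{\,\varepsilon_1/\varepsilon_2}$ expressed in $(e_3,\dot e_3)$.

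For the reaching phase I would take $V_1=\tfrac12 s^{T}s$ and differentiate along (\ref{eq_37}). Substituting the control law (\ref{eq_40}) cancels the known nonlinearity $a(x)$; the feedforward term $\gamma^{-1}\tfrac{\varepsilon_2}{\varepsilon_1}(e_4+\hat d_\psi)^{2-\varepsilon_1/\varepsilon_2}$ collapses against the chain-rule factor $\gamma\tfrac{\varepsilon_1}{\varepsilon_2}(e_4+\hat d_\psi)^{\varepsilon_1/\varepsilon_2-1}$ (their product being exactly $e_4+\hat d_\psi=\dot e_3$, which removes the $\dot e_3$ contribution); and the estimate $\hat d_0$ removes all of the composite term $d_0$ except the residual $\tilde d_0=d_0-\hat d_0$ furnished by the excitation-operator observers of Section III-B. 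What remains is $\dot V_1\le -k\,\gamma\tfrac{\varepsilon_1}{\varepsilon_2}(e_4+\hat d_\psi)^{\varepsilon_1/\varepsilon_2-1}\lVert s\rVert_1+(\text{terms in }\tilde d_0)$, so choosing the switching gain $k$ to dominate the bounded residual $\tilde d_0$ yields a negative-definite $\dot V_1$ and finite-time reaching of $s=0$. A subtlety I would address explicitly is that the factor $(e_4+\hat d_\psi)^{\varepsilon_1/\varepsilon_2-1}$ \emph{vanishes} rather than blows up on $\{e_4+\hat d_\psi=0\}$, since $\varepsilon_1/\varepsilon_2-1\in(0,1)$; as is standard for nonsingular terminal SMC, I would argue that this set is not an attractor away from $s=0$, so reaching is not obstructed and no control singularity appears (the positive power $2-\varepsilon_1/\varepsilon_2$ in (\ref{eq_40}) being precisely what guarantees the latter).

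For the sliding phase, on $s=0$ one has $e_3=-\gamma(e_4+\hat d_\psi)^{\varepsilon_1/\varepsilon_2}=-\gamma\,\dot e_3^{\,\varepsilon_1/\varepsilon_2}$, i.e. $\dot e_3=(-\gamma^{-1}e_3)^{\varepsilon_2/\varepsilon_1}$ with $\varepsilon_2/\varepsilon_1\in(\tfrac12,1)$; this terminal attractor drives $e_3\to0$ in finite time through the usual $V=\tfrac12 e_3^{T}e_3$ estimate, after which $\dot e_3=0$ forces the compensated velocity error $e_4+\hat d_\psi\to0$. To obtain the ISS claim I would reinstate the residual inputs $\tilde d_\psi$ (active only before $T_0$) and $\tilde d_0$, construct a composite Lyapunov/comparison argument on $(s,e_3)$, and show the state is bounded by a class-$\mathcal{KL}$ function of the initial condition plus a class-$\mathcal{K}$ function of $\sup\lVert(\tilde d_\psi,\tilde d_0)\rVert$, which also patches the pre-$T_0$ and post-$T_0$ phases together.

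I expect the main obstacle to be the coupling between these three error sources --- the transient observer error $\tilde d_\psi$, the excitation-operator residual $\tilde d_0$, and the fractional-power nonlinearity --- and in particular establishing ISS rigorously rather than merely finite-time stability of the nominal system: one must verify that the switching gain $k$ dominates $\tilde d_0$ uniformly, handle the behaviour on the thin set where the terminal power factor degenerates, and stitch the pre-$T_0$ and post-$T_0$ dynamics into a single ISS estimate.
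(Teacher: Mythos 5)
Your plan is correct and follows the same skeleton as the paper's proof: finite-time exactness of the observer (\ref{eq_38}) via Theorem 9.2 of \cite{li2014disturbance}, a reaching argument in which the switching gain $k$ dominates the residual $e_{d_0}=\hat d_0-d_0$, and the terminal dynamics $e_3+\gamma\dot e_3^{\varepsilon_1/\varepsilon_2}=0$ on $s=0$ as in (\ref{eq_50}). The differences lie in the Lyapunov packaging. Where you defer the pre-convergence phase to an abstract ISS stitching argument, the paper handles it explicitly and first: it takes the augmented function $V_1=\frac12 s^Ts+\frac12 e_3^Te_3+\frac12(e_4+\hat d_\psi)^T(e_4+\hat d_\psi)$ and derives $\dot V_1\le k_nV_1+\eta_n$, which (citing \cite{YANG20132287}) rules out finite-time escape of $s$, $e_3$, and $e_4+\hat d_\psi$ before the observer converges --- this is the concrete mechanism your ``patching'' step would need, and without something like it the appeal to post-$T_0$ dynamics is incomplete. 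Second, where you treat $\tilde d_0$ as a generic bounded residual, the paper couples the sliding variable to the fault-observer errors through $V_2=\frac12 s^Ts+V_0$, with $V_0$ the Lyapunov function of Theorem~1, so that $\dot V_0\le 0$ from (\ref{eq_31}) controls the residual, and the gain condition on $k$ is made explicit via the physical bound (\ref{eq_49}) obtained from (\ref{eq_12+1}); your version needs a uniform bound on $\tilde d_0$, which is exactly what this inequality supplies. Conversely, your proposal is more careful than the paper on two points the published proof glosses over: you note that the factor $(e_4+\hat d_\psi)^{\varepsilon_1/\varepsilon_2-1}$ degenerates on $\{e_4+\hat d_\psi=0\}$ and must be shown not to obstruct reaching (the paper never addresses this set), and you spell out the sliding-phase terminal dynamics $\dot e_3=(-\gamma^{-1}e_3)^{\varepsilon_2/\varepsilon_1}$, whereas the paper simply asserts convergence from (\ref{eq_50}). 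Net: same route, with your version stronger on the nonsingularity analysis and the paper's stronger on the pre-convergence boundedness and the explicit gain bound.
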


\begin{proof}
Taking the derivative of the proposed sliding mode surface Eq. (\ref{eq_39}) yields
\begin{equation}
	\label{eq_41}
	\begin{aligned}
		 \dot{s}=&{{{\dot{e}}}_{3}}+\gamma \frac{{{\varepsilon }_{1}}}{{{\varepsilon }_{2}}}\Theta\left({{\left( {{e}_{4}}+{{{\hat{d}}}_{\psi }} \right)}^{\frac{{{\varepsilon }_{1}}}{{{\varepsilon }_{2}}}-1}}\right)\left( {{{\dot{e}}}_{4}}+{{{\dot{\hat{d}}}}_{\psi }} \right) \\ 
		 =&{{e}_{4}}+{{d}_{\psi }}+\gamma \frac{{{\varepsilon }_{1}}}{{{\varepsilon }_{2}}}\Theta\left({{\left( {{e}_{4}}+{{{\hat{d}}}_{\psi }} \right)}^{\frac{{{\varepsilon }_{1}}}{{{\varepsilon }_{2}}}-1}}\right)\left( a\left( x \right)+b{{u}_{1}} \right) \\ 
		&+\gamma \frac{{{\varepsilon }_{1}}}{{{\varepsilon }_{2}}}\Theta\left({{\left( {{e}_{4}}+{{{\hat{d}}}_{\psi }} \right)}^{\frac{{{\varepsilon }_{1}}}{{{\varepsilon }_{2}}}-1}}\right)\left({{d}_{0}}+{{{\dot{\hat{d}}}}_{\psi }}\right).
	\end{aligned}
\end{equation}

By applying the proposed nonsingular terminal SMC law Eq. (\ref{eq_21}), Eq. (\ref{eq_41}) can be transformed into
\begin{equation}
	\label{eq_42}
\dot{s}=-\gamma \frac{{{\varepsilon }_{1}}}{{{\varepsilon }_{2}}}\Theta\left({{\left( {{e}_{4}}+{{{\hat{d}}}_{\psi }} \right)}^{\frac{{{\varepsilon }_{1}}}{{{\varepsilon }_{2}}}-1}}\right)\left( k\text{sgn} \left( s \right){{\left| s \right|}^{{{\varepsilon }_{3}}}}+{{e}_{{{d}_{0}}}} \right)-{{e}_{{{d}_{\psi }}}},
\end{equation}
where ${{e}_{{{d}_{0}}}}=\hat{d}_0-d_0={{B}^{*}}{{\tilde{d}}_{\Gamma }}+B_{\xi }^{*}{{\tilde{l}}_{m}}$ and ${{e}_{{{d}_{\psi}}}}=\hat{d}_\psi-d_\psi$ represents the estimation errors of $d_0$ and $d_\psi$. 

Considering the following Lyapunov candidate
\begin{equation}
	\label{eq_43}
{{V}_{1}}=\frac{1}{2}{{s}^{T}}s+\frac{1}{2}e_{3}^{T}{{e}_{3}}+\frac{1}{2}{\left( {{e}_{4}}+{{{\hat{d}}}_{\psi }} \right)}^{T}{\left( {{e}_{4}}+{{{\hat{d}}}_{\psi }} \right)}.
\end{equation}

The time-derivative of $V_1$ can be given as
\begin{equation}
	\label{eq_44}
	\begin{aligned}
		 {{{\dot{V}}}_{1}}=&{{s}^{T}}\dot{s}+e_{3}^{T}{{{\dot{e}}}_{3}}+{{\left( {{e}_{4}}+{{{\hat{d}}}_{\psi }} \right)}^{T}}\left( {{{\dot{e}}}_{4}}+{{{\dot{\hat{d}}}}_{\psi }} \right) \\ 
		 =&{{s}^{T}}\left[ -\gamma \frac{{{\varepsilon }_{1}}}{{{\varepsilon }_{2}}}\Theta\left({{\left( {{e}_{4}}+{{{\hat{d}}}_{\psi }} \right)}^{\frac{{{\varepsilon }_{1}}}{{{\varepsilon }_{2}}}-1}}\right)\left( k\text{sgn}\left( s \right)+{{e}_{{{d}_{0}}}} \right) \right]\\
		 &-{{s}^{T}}{{e}_{{{d}_{\psi }}}} +e_{3}^{T}\left( {{e}_{4}}+{{d}_{\psi }} \right) \\
		 &-{{\left( {{e}_{4}}+{{{\hat{d}}}_{\psi }} \right)}^{T}}\left({{\gamma }^{-1}}\frac{{{\varepsilon }_{2}}}{{{\varepsilon }_{1}}}{{\left( {{e}_{4}}+{{{\hat{d}}}_{\psi }} \right)}^{\frac{{{\varepsilon }_{1}}}{{{\varepsilon }_{2}}}-1}}\right)\\
		 & -{{\left( {{e}_{4}}+{{{\hat{d}}}_{\psi }} \right)}^{T}}\left( {{e}_{{{d}_{0}}}}+k\text{sgn}\left( s \right) \right) \\ 
		\le & {{s}^{T}}\left[ -\gamma \frac{{{\varepsilon }_{1}}}{{{\varepsilon }_{2}}}{\Theta\left({\left( {{e}_{4}}+{{{\hat{d}}}_{\psi }} \right)}^{\frac{{{\varepsilon }_{1}}}{{{\varepsilon }_{2}}}-1}\right)}{{e}_{{{d}_{0}}}}-{{e}_{{{d}_{\psi }}}} \right]\\
		&+e_{3}^{T}\left( {{e}_{4}}+{{d}_{\psi }} \right)+{{\left( {{e}_{4}}+{{{\hat{d}}}_{\psi }} \right)}^{T}}\left( -{{e}_{{{d}_{0}}}}-k\text{sgn}\left( s \right) \right). \\ 
	\end{aligned}
\end{equation}		
		
		Define  the maximum eigenvalues of a matrix ${{\lambda }_{\max }}\{\centerdot\}$, one can thereby achieve
		\begin{equation}
			\label{eq_44_1}
				\begin{aligned}
		{{{\dot{V}}}_{1}} \le& {{\lambda }_{\max }}\{\gamma\} \frac{{{\varepsilon }_{1}}}{{{\varepsilon }_{2}}}\left| s \right|\left( 1+\left| {{e}_{4}}+{{{\hat{d}}}_{\psi }} \right| \right)\left| {{e}_{{{d}_{0}}}} \right|+\left| {{s}^{T}}{{e}_{{{d}_{\psi }}}} \right|\\
		 &+\left| e_{3}^{T}\left( {{e}_{4}}+{{d}_{\psi }} \right) \right|+{{\left| {{e}_{4}}+{{{\hat{d}}}_{\psi }} \right|}^{T}}\left(\left| {{e}_{{{d}_{0}}}}\right|+kI\right)  \\ 
		 \le&  {{\lambda }_{\max }}\{\gamma\} \frac{{{\varepsilon }_{1}}}{2{{\varepsilon }_{2}}}\left( \left( 1+\left| {{e}_{{{d}_{0}}}} \right| \right){{s}^{T}}s+{{\left| {{e}_{{{d}_{0}}}} \right|}^{2}}\right)\\
		&+  {{\lambda }_{\max }}\{\gamma\} \frac{{{\varepsilon }_{1}}}{2{{\varepsilon }_{2}}}\left| {{e}_{{{d}_{0}}}} \right|{{\left( {{e}_{4}}+{{{\hat{d}}}_{\psi }} \right)}^{T}}\left( {{e}_{4}}+{{{\hat{d}}}_{\psi }} \right) +\frac{1}{2}{{s}^{T}}s\\
		&+\frac{1}{2}e_{{{d}_{\psi }}}^{T}{{e}_{{{d}_{\psi }}}} +\frac{1}{2}e_{3}^{T}{{e}_{3}}+\frac{1}{2}{{\left( {{e}_{4}}+{{{\hat{d}}}_{\psi }} \right)}^{T}}\left( {{e}_{4}}+{{{\hat{d}}}_{\psi }} \right)\\
		&+\frac{1}{2}{{\left( {{e}_{4}}+{{{\hat{d}}}_{\psi }} \right)}^{T}}\left( {{e}_{4}}+{{{\hat{d}}}_{\psi }} \right)+\frac{1}{2}{{\left| {{e}_{{{d}_{0}}}} \right|}^{2}}\\
		&+\frac{k}{2}{{\left( {{e}_{4}}+{{{\hat{d}}}_{\psi }} \right)}^{T}}\left( {{e}_{4}}+{{{\hat{d}}}_{\psi }} \right)+\frac{k}{2} \\ 
		 \le &{{k}_{n}}{{V}_{1}}+{{\eta }_{n}}, \\ 
	\end{aligned}
\end{equation}
where ${{k}_{n}}={{\lambda }_{\max }}\{\frac{{{\varepsilon }_{1}}}{{{\varepsilon }_{2}}}\left( 1+{{\left| {{e}_{{{d}_{0}}}} \right|}^{2}} \right)\gamma +I,\frac{{{\varepsilon }_{1}}}{{{\varepsilon }_{2}}}\left| {{e}_{{{d}_{0}}}} \right|\gamma +\left( k+2 \right)I,1\}$, and ${{\eta }_{n}}=\left( \frac{{{\lambda }_{\max }}\{\gamma \}{{\varepsilon }_{1}}}{2{{\varepsilon }_{2}}}+\frac{1}{2} \right){{\left| {{e}_{{{d}_{0}}}} \right|}^{2}}+\frac{1}{2}{{\left| {{e}_{{{d}_{\psi }}}} \right|}^{2}}+\frac{k}{2}$. Thus, based on the analysis in \cite{YANG20132287}, it can be seen from Eq. (\ref{eq_44}) that the sliding mode surface $s$, tracking error $e_3$, and ${{e}_{4}}+{{{\hat{d}}}_{\psi }}$ will not escape to infinity in finite-time since the boundedness of ${{e}_{{{d}_{0}}}}$ and ${{e}_{{{d}_{\psi}}}}$ .

According to Theorem 9.2 in \cite{li2014disturbance}, the mismatched disturbance ${{{{d}}}_{\psi }}$ can be estimated within finite-time through the finite-time disturbance observer. In that case, the derivative of the proposed sliding mode surface Eq. (\ref{eq_42}) can be rewritten as
\begin{equation}
	\label{eq_45}
	\dot{s}=-\gamma \frac{{{\varepsilon }_{1}}}{{{\varepsilon }_{2}}}\Theta\left(\tilde{e}_4\right)\left( k\text{sgn} \left( s \right)+{{e}_{{{d}_{0}}}} \right),
\end{equation}
where $\tilde{e}_4={{\left( {{e}_{4}}+{{{\hat{d}}}_{\psi }} \right)}^{\frac{{{\varepsilon }_{1}}}{{{\varepsilon }_{2}}}-1}}$. Consider another Lyapunov function
\begin{equation}
	\label{eq_46}
	{{V}_{2}}=\frac{1}{2}{{s}^{T}}s+V_0.
\end{equation}

Differentiating $V_2$ renders
\begin{equation}
		\label{eq_47}
\begin{aligned}
	{{{\dot{V}}}_{2}}=& {{s}^{T}}\dot{s}+{{{\dot{V}}}_{0}}\\
	=& -{{s}^{T}}\gamma \frac{{{\varepsilon }_{1}}}{{{\varepsilon }_{2}}}\Theta \left( {{{\tilde{e}}}_{4}} \right)\left( k\text{sgn}\left( s \right)+{{B}^{*}}{{{\tilde{d}}}_{\Gamma }}+B_{\xi }^{*}{{{\tilde{l}}}_{m}} \right)+{{\dot{V}}_{0}}\\
	\le & -\frac{{{\varepsilon }_{1}}}{{{\varepsilon }_{2}}}\Theta \left(\gamma  {{{\tilde{e}}}_{4}} \right)\left( k-{{B}^{*}}{{{\tilde{d}}}_{\Gamma }}-B_{\xi }^{*}{{{\tilde{l}}}_{m}} \right)|s|+{{\dot{V}}_{0}}.
\end{aligned}
\end{equation}

Considering Eqs. (\ref{eq_5_1}), (\ref{eq_6_F1}) and (\ref{eq_12+1}), the following inequality holds:
\begin{equation}
	\label{eq_49}
\left| {{B}^{*}}{{{\tilde{d}}}_{\Gamma }}+B_{\xi }^{*}{{{\tilde{l}}}_{m}} \right|\le -B{{R}_{u}}\left( {{f}_{\max }}I \right)+\left[ \begin{matrix}
	0  \\
	0.5B_{\xi }^{*}{{\left[ L,W,H \right]}^{T}} . \\
\end{matrix} \right]
\end{equation}
Hence, based on (\ref{eq_31}), it is known that $\dot{V}_2 \le 0$ due to $k \ge -B{{R}_{u}}\left( {{f}_{\max }}I \right)+\left[ \begin{matrix}
	0  \\
	0.5B_{\xi }^{*}{{\left[ L,W,H \right]}^{T}}  \\
\end{matrix} \right]$. As a result, the sliding surface $s=0$ can be reached,

Once the sliding surface $s=0$ is reached, it is derived from the sliding surface (\ref{eq_39}) and the system dynamics (\ref{eq_37}) that
\begin{equation}
	\label{eq_50}
s={{e}_{3}}+\gamma {{\left( {{e}_{4}}+{{{\hat{d}}}_{\psi }} \right)}^{{{\varepsilon }_{1}}/{{\varepsilon }_{2}}}}={{e}_{3}}+\gamma \dot{e}_{3}^{{{\varepsilon }_{1}}/{{\varepsilon }_{2}}}=0.
\end{equation}

With the chosen control parameters, the tracking errors $e_3$ and $e_4$ of system (\ref{eq_50}) will converge into zero, and the closed-loop system (\ref{eq_50}) is input-to-state stable, which completes the proof.
\end{proof}	

\subsection{Control Allocation Module}
	The control allocation unit is responsible for translating the control input vector $u={{\left[ \begin{matrix}
				{{F}{m}} & {{M}{x}} & {{M}{y}} & {{M}{z}} \
			\end{matrix} \right]}^{T}}$ into the corresponding thrust commands for each rotor, as defined by Eq. (\ref{eq_6}). The proposed fault separation maneuver involves sacrificing the controllability of the yaw channel to inject auxiliary input signals. The assumption is that the thrust command of the target rotor receiving the auxiliary input signal is already known. Under this condition, the mapping matrix can be redefined as follows.
\begin{equation}
	\label{eq_4_1}
	\left[ \begin{matrix}
		{{F}_{m}}  \\
		{{M}_{x}}  \\
		{{M}_{y}}  \\
	\end{matrix} \right]=\left[ \begin{matrix}
		1 & 1 & 1  \\
		-\frac{{{d}_{\phi }}}{2} & \frac{{{d}_{\phi }}}{2} & \frac{{{d}_{\phi }}}{2}  \\
		 -\frac{{{d}_{\theta }}}{2} & \frac{{{d}_{\theta }}}{2} & -\frac{{{d}_{\theta }}}{2} 
	\end{matrix} \right]\left[ \begin{matrix}
		{{f}_{2}}  \\
		{{f}_{3}}  \\
		{{f}_{4}}  \\
	\end{matrix} \right]+f_{ex}\left[ \begin{matrix}
     1    \\
	-\frac{{{d}_{\phi }}}{2}  \\
	\frac{{{d}_{\theta }}}{2} \\
\end{matrix} \right].
\end{equation}
Moreover, the thrust force sent to each actuator can obtain
\begin{equation}
	\label{eq_4_2}
	\left[ \begin{matrix}
		{{f}^{c}_{1}}  \\
		{{f}^{c}_{2}}  \\
		{{f}^{c}_{3}}  \\
		{{f}^{c}_{4}}  \\
	\end{matrix} \right]=\left[ \begin{matrix}
		{{f}_{ex}}  \\
		\left[ \begin{matrix}
			0.5 & -\frac{1}{{{d}_{\phi }}}  & 0  \\
		    0.5 & 0                                         & \frac{1}{{{d}_{\theta }}}  \\
			0    &   \frac{1}{{{d}_{\phi }}}  & -\frac{1}{{{d}_{\theta }}}  
		\end{matrix} \right] 	\left[ \begin{matrix}
		{{F}_{m}}-	 {{f}_{ex}}  \\
		{{M}_{x}}+\frac{{{d}_{\phi }}}{2} 	{{f}_{ex}}   \\
		{{M}_{y}} -\frac{{{d}_{\theta }}}{2}  {{f}_{ex}} \\
	\end{matrix} \right]\\
	\end{matrix} \right].
\end{equation}

The bijective relationship between the control input generated by the fault separation controller and the thrust command of each rotor ensures that the injection of auxiliary input signals does not compromise the safety and tracking performance of the quadrotor UAV in the lift, pitch, and roll channels. This is because the thrust command for each rotor is directly controlled by the control input, which is designed to satisfy the desired lift, pitch, and roll commands. Therefore, any changes in the thrust command of a single rotor due to the injection of auxiliary input signals will not affect the overall safety and stability of the quadrotor UAV.


\begin{rem} \label{remark3}
	On one hand, by combining the excitation operator and integrated state observer, the coupled LoE fault, aging and load uncertainty can be estimated separately. Compared with passive FDD, the robustness and accuracy of FDD can be improved. On the other hand, the injection of  excitation operator leads to the unstability of the under-actuated quadrotor UAV system, which is a common problem caused by  auxiliary input signals. In an attempt to address this problem, the controllability of yaw channel is sacrificed. In view of (\ref{eq_40}), the term of sliding surface significantly compensates the mismatched disturbance caused by yaw rotation. Moreover, the tracking of pitch and roll can be guaranteed. As a result, the 3-D trajectory tracking performance during the injection of excitation operator can be ensured.
\end{rem}

\section{EXPERIMENTAL VALIDATION}
In this section, comparative simulations and real-world tests are conducted to verify the effectiveness of the proposed strategy. 

\subsection{Simulation Results  }

This section shows some simulation results to verify whether the proposed scheme can effectively address the actuator faults and unknown input delay separately. Futhermore, the proposed fault separation and estimation scheme, the compared fixed-time FDD (named FxT) \cite{9385897}, and baseline robust controller without FDD (named NoFDD) are compared under the same condition to illustrate the superiority of the proposed scheme.

\subsection{Simulation Condition}
The simulation are designed as follows

$\bullet$ The physical parameters of the quadrotor UAV considered in the simulation are listed in Table \ref{table_2}. 

$\bullet$ The test trajectory is designed as $[0.7*\sin(\frac{2\pi}{T}) \quad 0.7*\cos(\frac{2*\pi}{T}) \quad 0.1*(t-20)+1]$, where the period $T=4\pi$. The tracking start time is set as the 20\emph{th} second.

$\bullet$ The simulation lasts 50s. The actuator faults occurs in the 35\emph{th} second of the simulation, and the LoE magnitude is 30\%. Meanwhile, the unknown input delay is considered to always exist and constant throughout the simulation.

\definecolor{mygray}{gray}{.9}
\begin{table}[!t]
	\renewcommand{\arraystretch}{1.3}
	\caption{Physical Parameters of the Quadrotor UAV } 
	\label{table_2}
	\centering
	\begin{tabular}{lll}
		\toprule
		Parameter & Value & Unit \\
		\midrule
		$ m $ & 1.121 & $kg$\\
		\rowcolor{mygray} $ g $ & 9.81 & $m/s^2$\\
		$ {d}_{\phi } $ & 0.2122 & $m$\\
		\rowcolor{mygray} ${{d}_{\theta }}$ & 0.2122  & $m$\\
		$ {{I}_{x}}$ & 5.6$\times$10$^{-3}$ & $kgm^2$\\
		\rowcolor{mygray} $ {{I}_{y}}$ & 5.6$\times$10$^{-3}$ & $kgm^2$\\
		$ {{I}_{z}}$ & 8.1$\times$10$^{-3}$ & kgm$^2$ \\
		\bottomrule
	\end{tabular}
\end{table}

\begin{figure}[!t]
	\centering
	\includegraphics[width=3.5in]{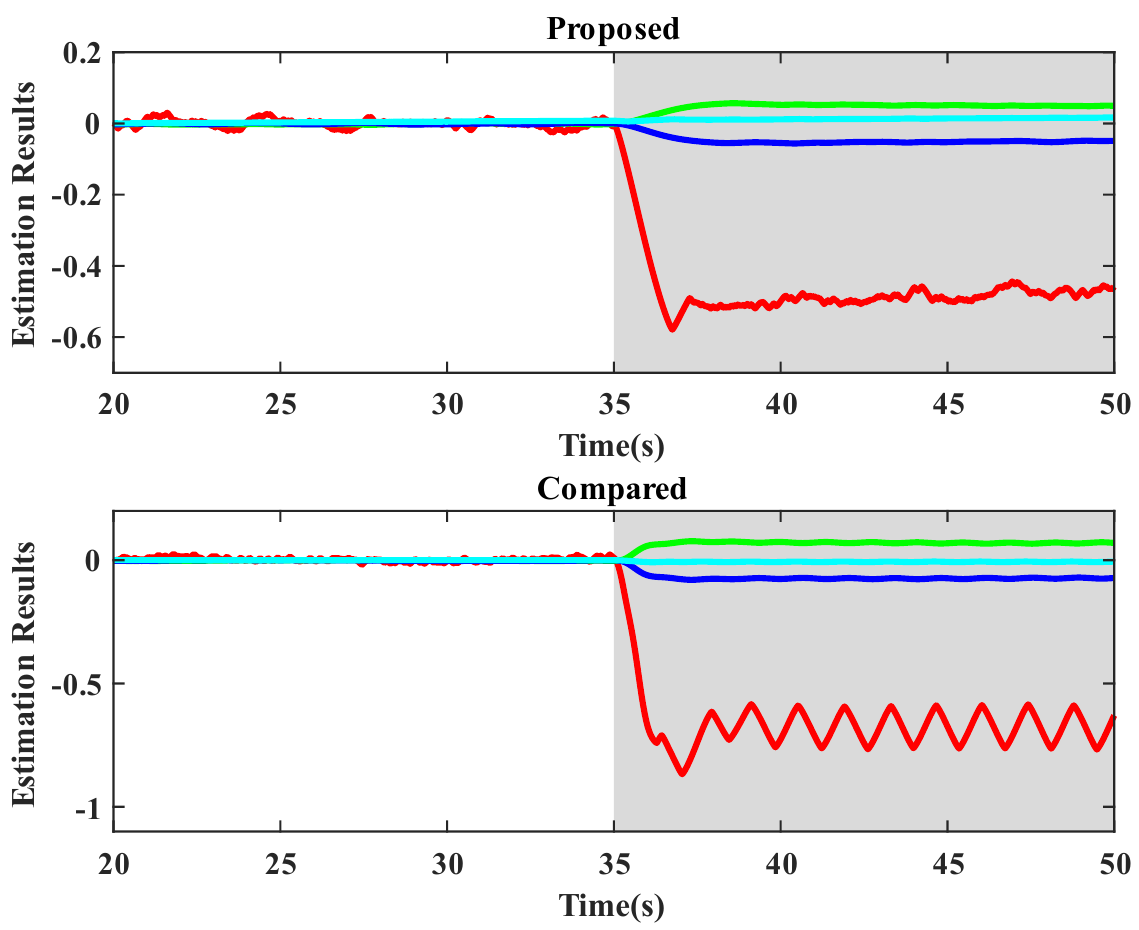}
	\caption{Comparison of estimation results: the gray area denotes the scenarios of LoE actuator fault occurrence}
	\label{fig_3}
\end{figure}

\begin{figure*}[!t]
	\centering
	\includegraphics[width=6.5in]{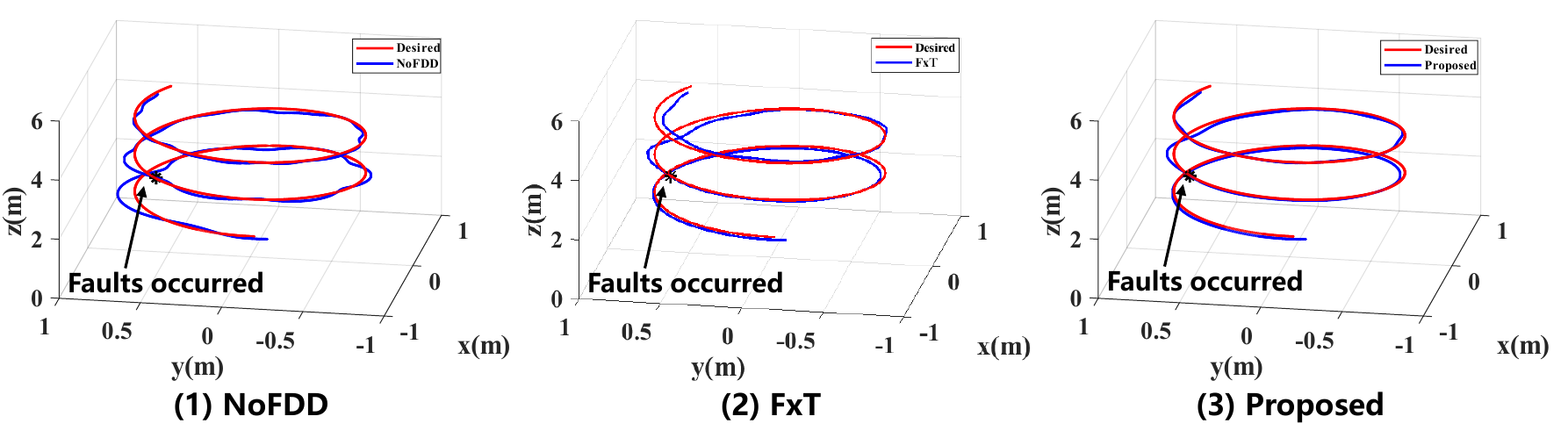}
	\caption{Comparison of position tracking performance.}
	\label{fig_4}
\end{figure*}

\subsection{Assessment}
The acutator faults estimation comparison between the proposed fault separation scheme and FxT are shown in Fig. \ref{fig_3}. Both the selected FDD schemes can achieve the robustness to the LoE actuator faults. Moreover, the effect of actuator faults and unknown input delay is separated and estimated because of the designed auxiliary control signal. Therefore, the estimation result of the proposed scheme is more stable and smooth than that of FxT due to the unknown dynamic of input delay is isolated. The standard deviation (STD) has been reduced by 46.3\% in comparison of FxT. In other words, the FDD adopting the proposed scheme can achieve a lower false alarm rate compared to the observer-based FDD scheme.

\begin{figure}[!t]
	\centering
	\includegraphics[width=3.5in]{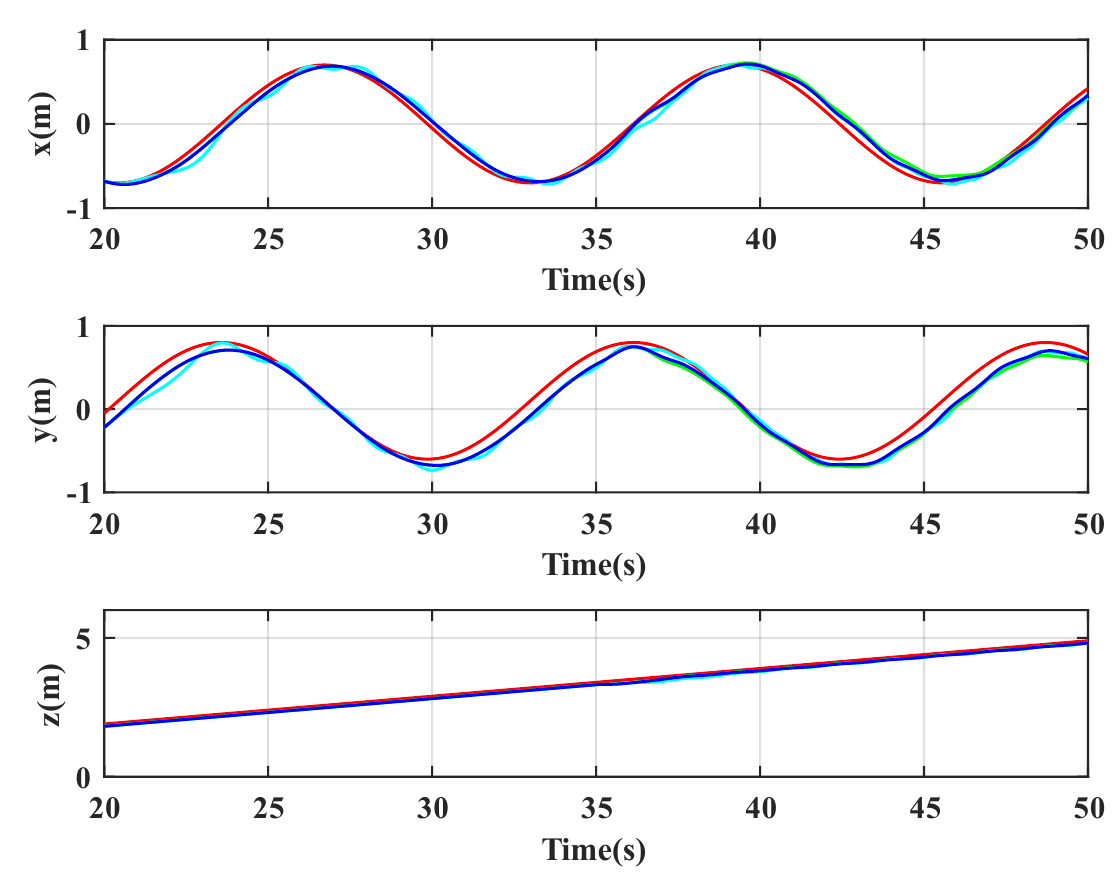}
	\caption{Time response of position tracking performance.}
	\label{fig_4.5}
\end{figure}

\begin{figure}[!t]
	\centering
	\includegraphics[width=3.5in]{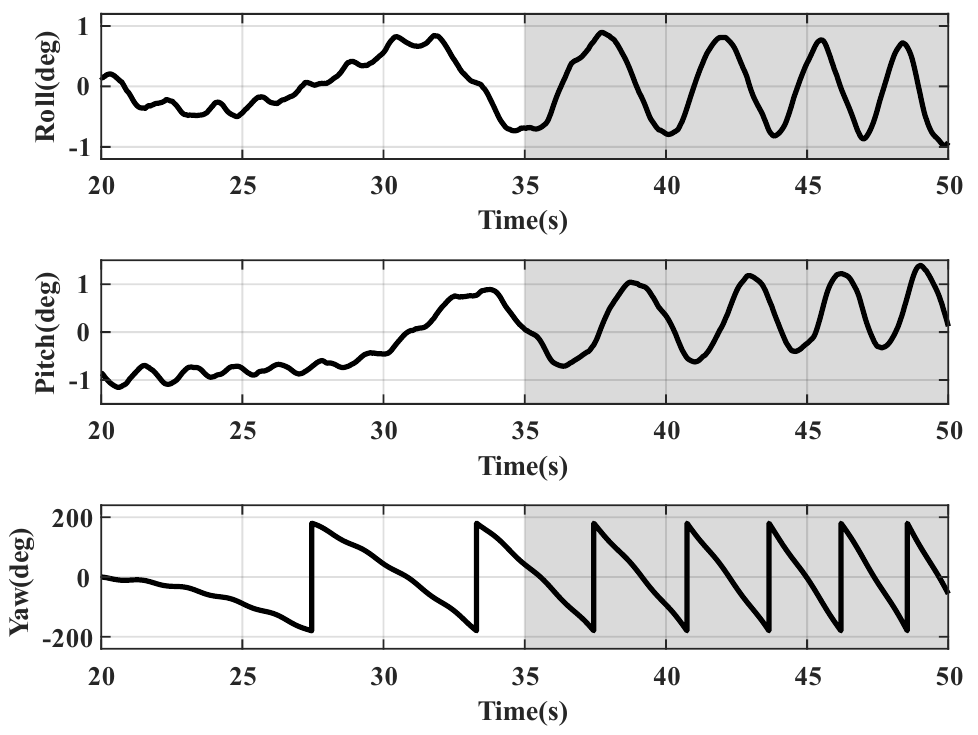}
	\caption{Attitude tracking performance: the gray area denotes the scenarios of LoE actuator fault occurrence.}
	\label{fig_5.5}
\end{figure}

The compared tracking performance among the proposed fault saperation scheme, FxT, and NoFDD is demonstrated in Fig. \ref{fig_4}.  As can be seen from Fig. \ref{fig_4}, The proposed fault saperation scheme achieves effective tracking of the desired trajectory, even if the yaw channel is sacrificed. Moreover, compared with FxT and NoFDD, the proposed fault saperation scheme has better tracking mean absolute error (MAE). Quantitatively, the tracking error of the proposed fault saperation scheme is 0.12$m$, while the errors of FxT and NoFDD is 0.26$m$ and 0.38$m$. The time responses of position and attitude are depicted in Figs. \ref{fig_4.5}-\ref{fig_5.5}. As a result, the proposed scheme is capable of not only ensuring the accuracy of FDD, but also preserving the tracking performance.

The thrust force calculated by the control allocation module are displayed in Fig. \ref{fig_6.5}. When the designed auxiliary control signal is injected, the control allocation module instantaneously adjusts the control inputs of other rotors to maintain the tracking performance as much as possible. As can be illustrated from Fig. \ref{fig_6.5}, owing to the injection of the designed auxiliary control signal, the command signal of M1 is sinusoidal. Meanwhile, the tracking performance of system state is guaranteed by sacrificing the controllability of yaw channel, as shown in Figs. \ref{fig_4}-\ref{fig_5}. Hence, from Figs. \ref{fig_3}-\ref{fig_6}, the simulation results illustrate the effectiveness of the proposed scheme. 
\begin{figure}[!t]
	\centering
	\includegraphics[width=3.5in]{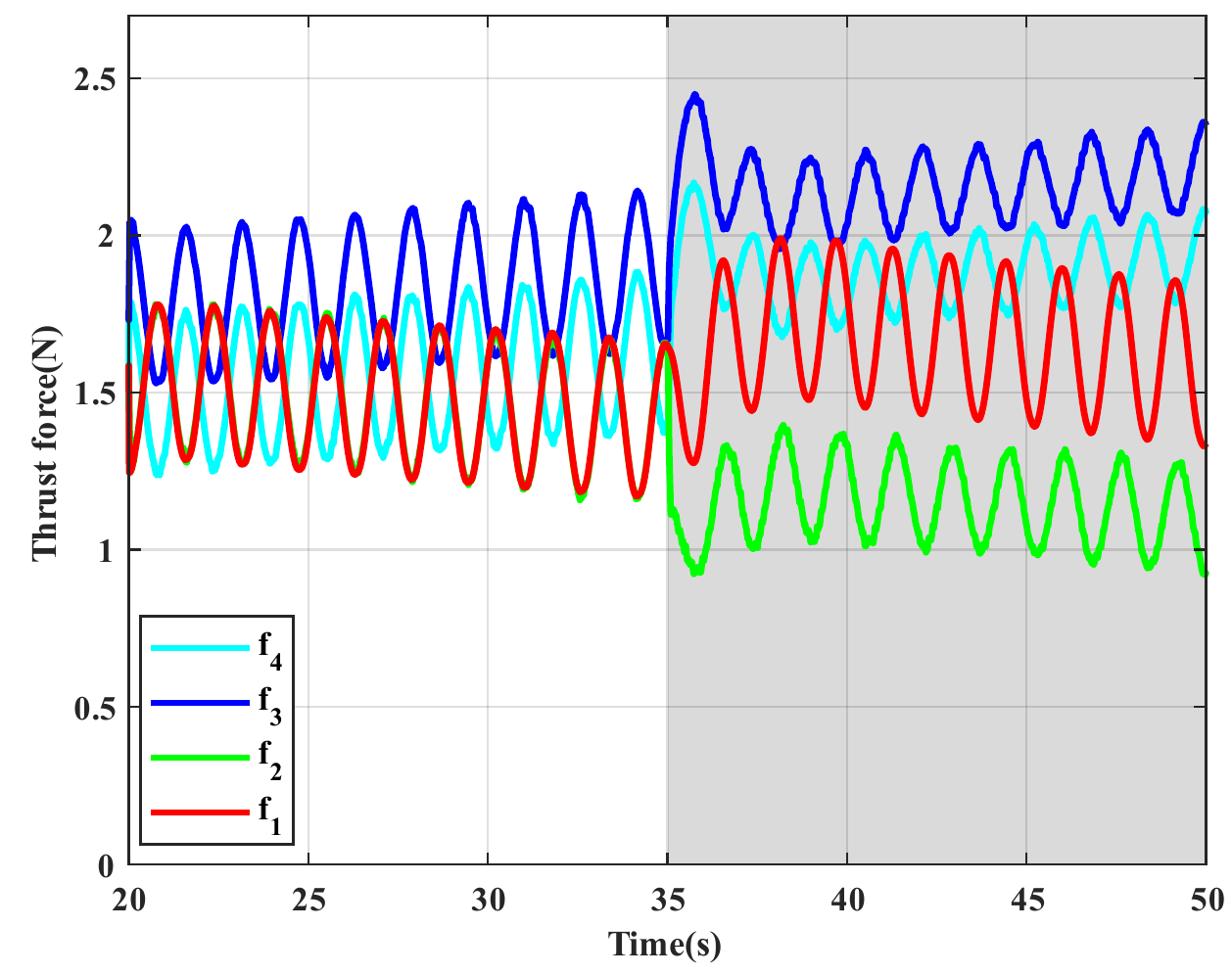}
	\caption{Thrust force generated by the proposed fault saperation and estimation scheme: the gray area denotes the scenarios of LoE actuator fault occurrence.}
	\label{fig_6.5}
\end{figure}

Moreover, a thousand Monte Carlo simulations based on the previous simulation are carried out. The indices are defined to quantify the experimental results. The mean estimation error (MEE) is defined as $\frac{1}{M}\sum\nolimits_{i=1}^{M}{\left\| {{\xi }_{\omega ,i}}-{{{\hat{\xi }}}_{\omega ,i}} \right\|}$, where $M$ is the length of data set. The mean tracking error (MTE) is defined as $\frac{1}{M}\sum\nolimits_{i=1}^{M}{\left\| {{p}_{d,i}}-{{p}_{i}} \right\|}$. The misdiagnosis rate (MR) is defined as $\frac{{{N}_{f}}}{{{N}_{s}}}$, where $N_s$ represents the number of experiments and $N_f$ denotes the number of misdiagnosis. 

The quantitative comparison is listed in Table \ref{table_4}. The proposed method outperforms either the anti-disturbance fault diagnosis observer and fault diagnosis framework. Especially for the MR index, compared with the other two methods, the MR of the proposed method is increased by $37.84\%$ and $70.27\%$ respectively. This is due to that the proposed excitation operator achieves high-precision by injecting the designed excitation operator. Moreover, the effect of LoE can be significantly attenuated by the designed safety controller.

\begin{table}[!t]
	\renewcommand{\arraystretch}{1.3}
	\caption{Monte Carlo Simulation Results} 
	\label{table_4}
	\centering
	\begin{tabular}{llll}
		\toprule
		Method & MEE & MTE(m) & MR($\%$) \\
		\midrule
		Ref. \cite{cao2017anti}   &$0.27$& $0.062$ &$5.1$\\
		\rowcolor{mygray} Ref. \cite{jia2022novel}   &$0.49$& $0.156$ &$6.3$\\
		Proposed &$0.18$& $0.053$ &$3.7$\\
		\bottomrule
	\end{tabular}
\end{table} 

\subsection{Experimental Results}
In order to exemplify the proposed excitation operator based FDD, comparative experimental tests are conducted. The overall architecture of the experimental system and arrangement is shown in Fig. \ref{fig_5}, which includes communication, navigation, and control modules. The positon of a quadrotor UAV can be obtained by a motion capture system.  The attitude of a quadrotor UAV can be measured with IMU and a motion capture system. The communication between the ground station and the quadrotor is transmitted by a UWB module.  An Lithium-Polymer battery, dual-blade propellers, and four customized motors are exploited to provide the corresponding thrusts. The load uncertainty of the flight experiment arises from the installation error of battery. The LoE fault and actuator aging are injected by software.

\begin{figure*}
	\centering
	\includegraphics[width=0.7\linewidth]{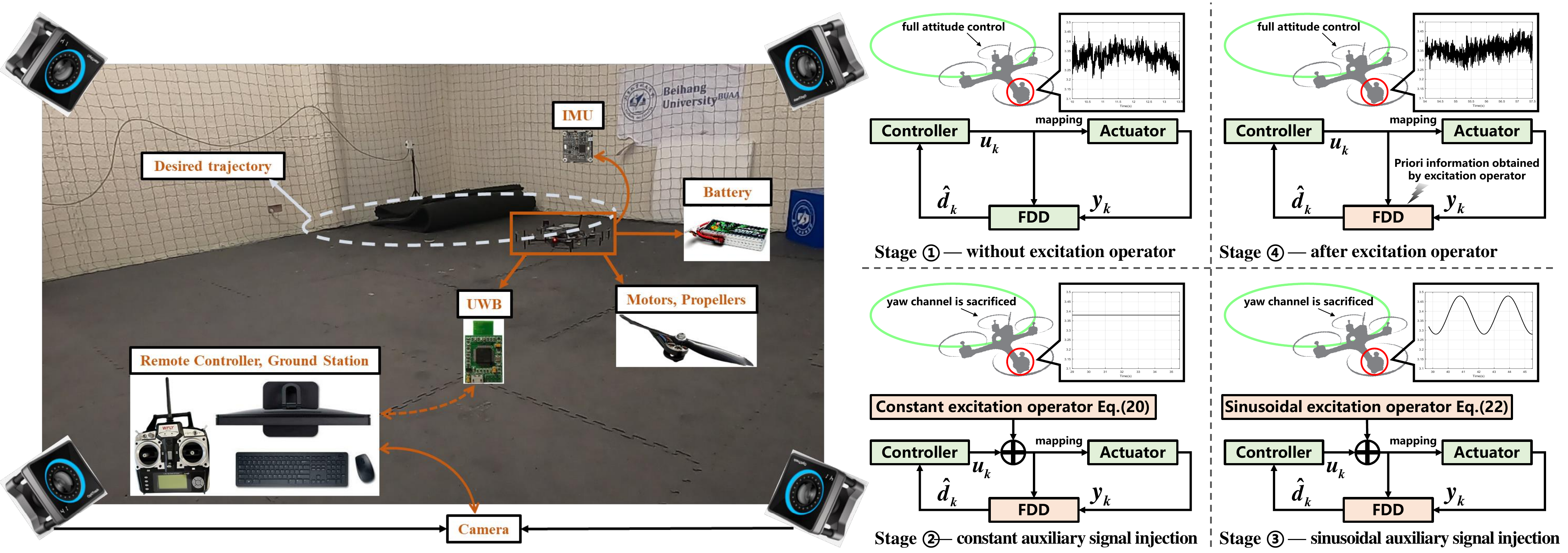}
	\caption{ \textcolor{black}{The overall framework and arrangement of the experimental test.}  }
	\label{fig_5}
\end{figure*}


The result of LoE degree estimated by the proposed FDD is illustrated in Fig. \ref{fig_6}. Due to the injection of the designed excitation operator, the LoE fault and actuator aging are successfully separated. Moreover, the estimation error of FDD decreases obviously. The MEE of the proposed method is $5.26\%$, while the anti-disturbance fault diagnosis observer is $8.74\%$. The root cause of the MEE is the measurement noise of the quadrotor UAV. Hence, from Fig. \ref{fig_6}, the proposed method can achieve superior fault estimation accuracy than that of the compared scheme.

\begin{figure}
	\centering
	\includegraphics[width=0.5\linewidth]{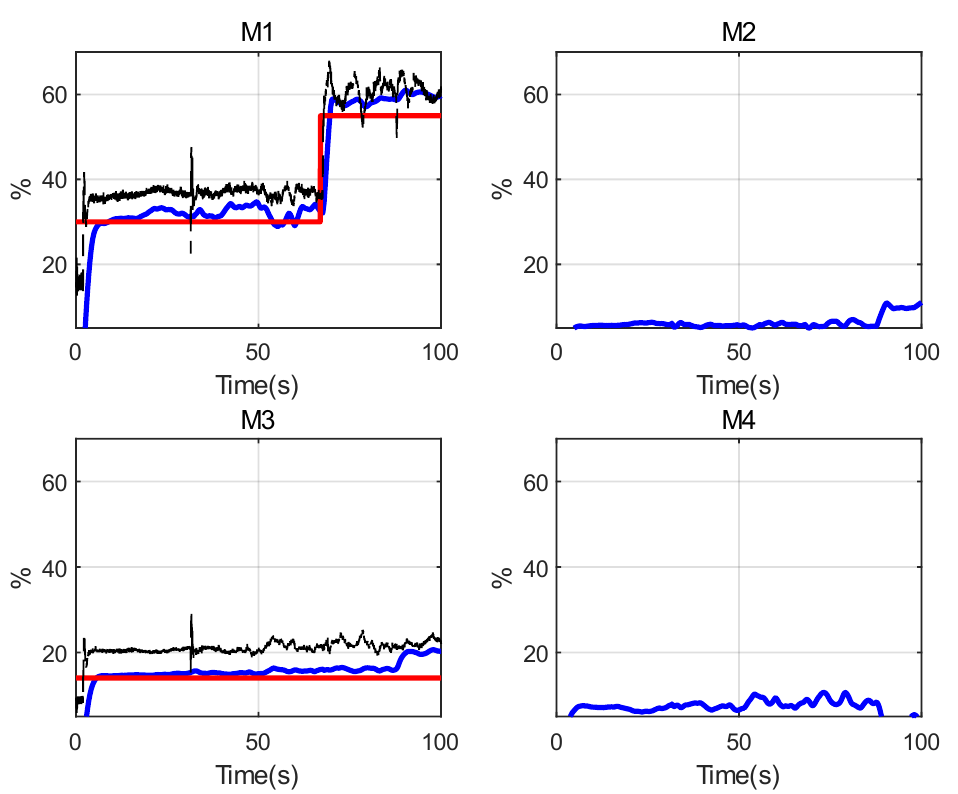}
	\caption{ \textcolor{black}{The estimation result of FDD.}  }
	\label{fig_6}
\end{figure}

\begin{figure}
	\centering
	\includegraphics[width=0.7\linewidth]{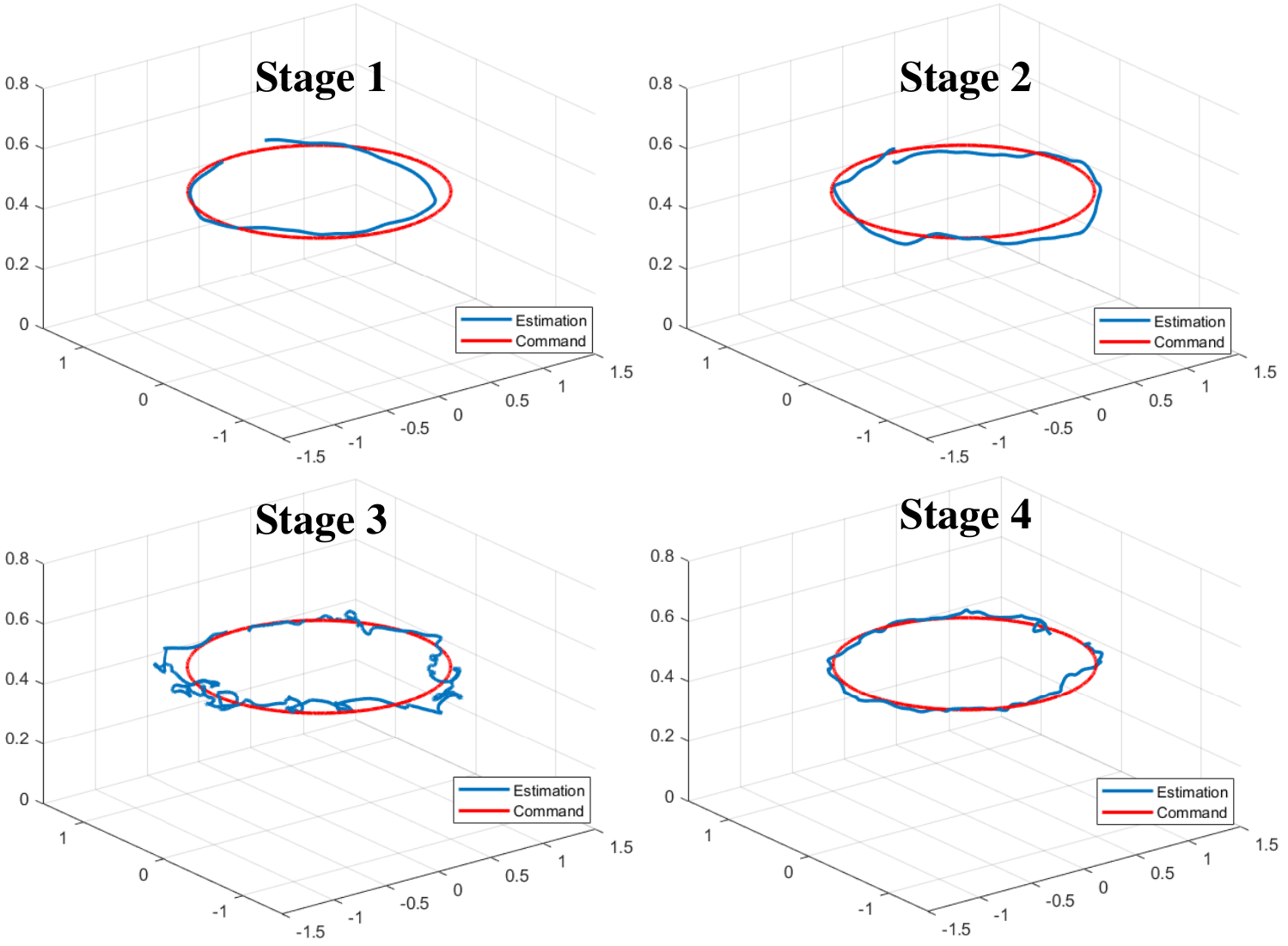}
	\caption{ \textcolor{black}{The position tracking performance}  }
	\label{fig_7}
\end{figure}

\begin{figure}
	\centering
	\includegraphics[width=0.7\linewidth]{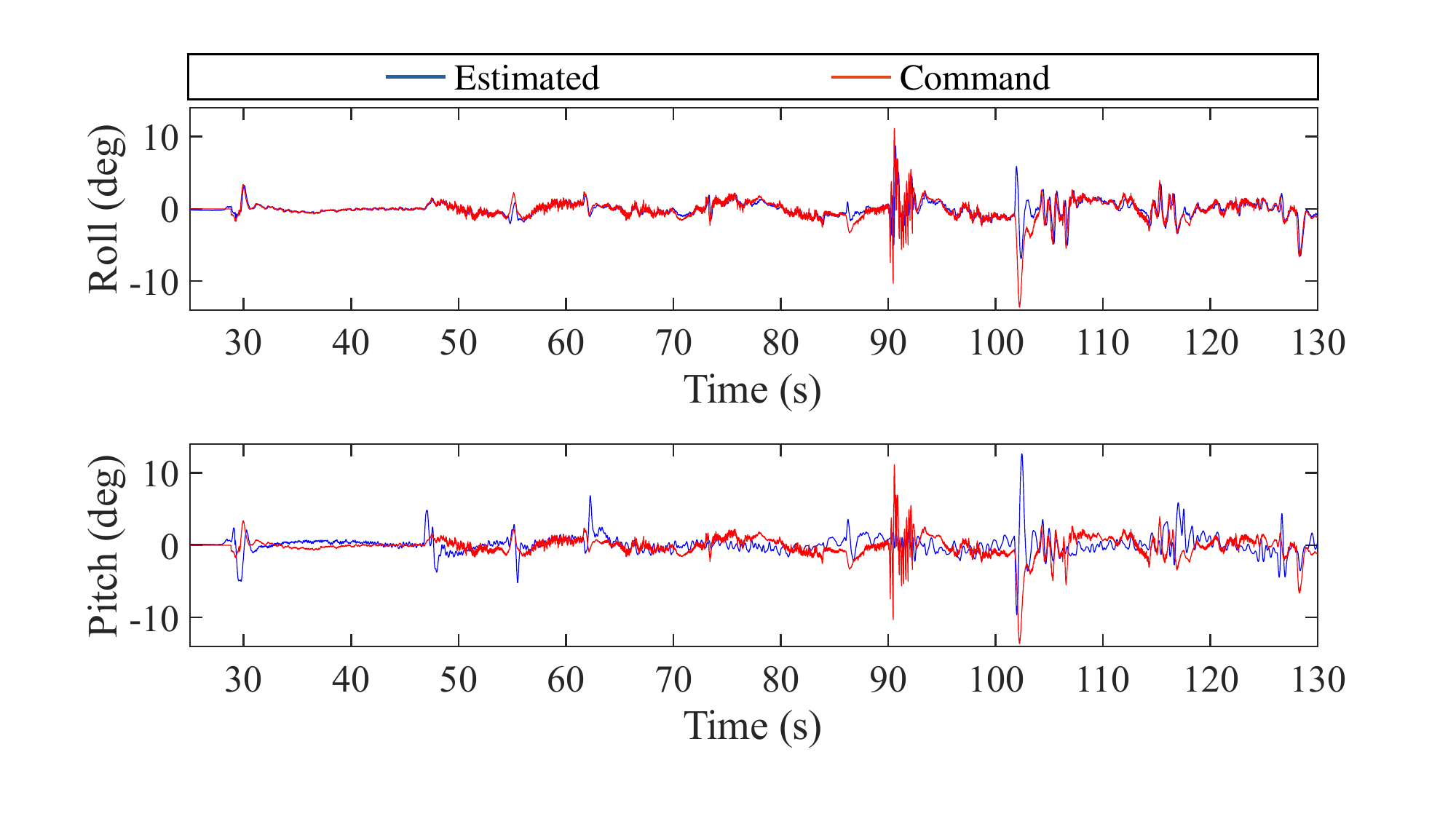}
	\caption{ \textcolor{black}{The attitude tracking performance}  }
	\label{fig_8}
\end{figure}

\begin{figure}
	\centering
	\includegraphics[width=0.7\linewidth]{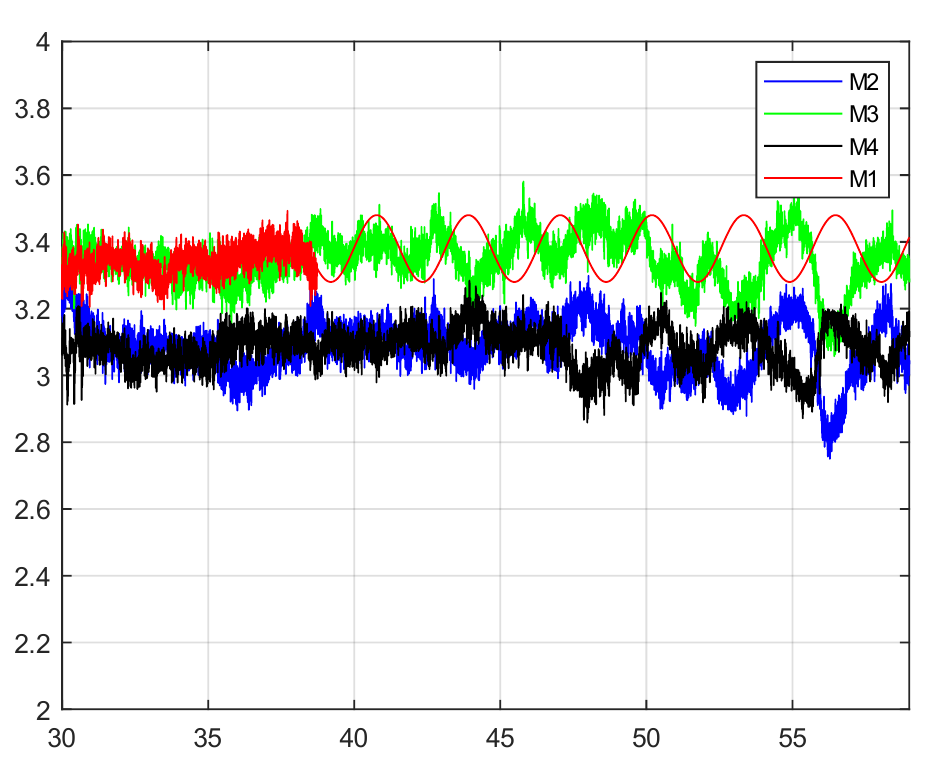}
	\caption{ \textcolor{black}{The thrust force commands.}  }
	\label{fig_9}
\end{figure}

The 3D tracking trajectory at each stage is demonstrated in Fig. \ref{fig_7}, while the tracking responses of attitude are depicted in Fig. \ref{fig_8}. Stages 1-4 represent tracking perfomance of post-fault quadrotor UAV without excitation operator, with constant auxiliary signal injection, with sinusoidal auxiliary signal injection, and after excitation operator injection, respectively. As can be seen from Fig. \ref{fig_7}, by resorting to the fault separation developed in this study, active actions for actuator health status can be taken. The MTE of the proposed safety control at each stage is $0.14m$, $0.19m$, $0.31m$, and $0.08m$, respectively. The tracking error of UAV increases during the injection of excitation operator. However, the tracking accuracy after fault diagnosis is improved than that before the injection of excitation operator.

\begin{table}[!t]
	\renewcommand{\arraystretch}{1.3}
	\caption{The Quantitative Results of Flight Tests} 
	\label{table_3_3}
	\centering
	\begin{tabular}{lllll}
		\toprule
		& Mean(m) & STD & Max(m) &MEE(\%) \\
		\midrule
		Proposed & $0.28$ & $0.0418$ & $0.23$ & $9.2$\\
		\rowcolor{mygray} Ref. \cite{cao2017anti} & $0.32$ & $0.0591$ & $0.29$ & $15.26$ \\
		Ref. \cite{jia2022novel} &  $0.61$ &  $0.0975$ & $0.35$ & $19.66$\\
		\bottomrule
	\end{tabular}
\end{table}

The thrust force commands generated by the proposed FDD scheme are presented in Fig. \ref{fig_9}. When the proposed FDD scheme is activated, the corresponding safety control strategy immediately adjusts the control inputs of the other rotors to maintain stability and tracking performance as much as possible. As shown in Figure \ref{fig_9}, the injection of the designed excitation operator results in a sinusoidal command signal for rotor M1. However, the tracking performance of the system state is guaranteed by sacrificing the controllability of the yaw channel, as demonstrated in Figs. \ref{fig_7} and \ref{fig_8}. Therefore, the experimental results presented in Figs. \ref{fig_7}-\ref{fig_9} confirm the applicability of the proposed scheme.


\section{Conclusion}
In this paper, an fault separation scheme is proposed. Similar to active fault diagnosis, an excitation operator is injected to achieve the higher precision estimation of actuator fault and load uncertainty. Moreover, the physical constraint and tracking performance are taken into consideration in the design of excitation operator. The controllability of yaw channel is sacrificed in order to ensure the 3-D trajectory tracking capability when the excitation operator is injected. Meanwhile, the safety controller is designed to compensate the mismatched disturbance caused by uncontrolled yaw. Comparative simulation and flight experiments have illustrated the effectiveness of the proposed active fault separation scheme where tracking performance can be guaranteed.

\ifCLASSOPTIONcaptionsoff
  \newpage
\fi



%

 


\bibliographystyle{IEEEtran}
\bibliography{IEEEabrv,mylib}

%







\end{document}